\title{On the Geometry of Nash and Correlated Equilibria with Cumulative Prospect Theoretic Preferences}
\author{Soham R.\ Phade and Venkat Anantharam
\thanks{Research supported by the NSF Science and Technology Center grant CCF-
0939370: "Science of Information", the NSF grants ECCS-1343398, CNS-1527846
and CIF-1618145, and the William and Flora Hewlett Foundation supported Center for Long Term Cybersecurity at Berkeley.}
\thanks{The authors are with the Department of Elelctrical Engineering and Computer Science, University of California, Berkeley, Berkeley, CA 94720.
        {\tt\small soham\_phade@berkeley.edu, ananth@eecs.berkeley.edu}}%
}
\date{}
\begin{document}
\maketitle

\begin{abstract}
It is known that the set of all correlated equilibria of an $n$-player non-cooperative game is a convex polytope and includes all the Nash equilibria. Further, the Nash equilibria all lie on the boundary of this polytope. We study the geometry of both these equilibrium notions when the players have cumulative prospect theoretic (CPT) preferences. The set of CPT correlated equilibria includes all the CPT Nash equilibria but it need not be a convex polytope. We show that it can, in fact, be disconnected. However, all the CPT Nash equilibria continue to lie on its boundary. We also characterize the sets of CPT correlated equilibria and CPT Nash equilibria for all $2\times 2$ games.
\end{abstract}


\section{Introduction}

Non-cooperative game theory studies the interaction between decision makers with possibly different objectives. In most cases, it is assumed that the decision makers are \emph{rational}. 
Rationality is generally formulated as expected utility maximization. The justification for this comes from the von Neumann and Morgenstern expected utility maximization theorem \citep{von2007theory}. Although this assumption has a nice normative appeal to it and can be used to a large extent as a prescriptive theory, it has been evident through several examples  \citep{allais1953comportement,fishburn1979two,kahneman1979prospect} that the model is not that good an approximation for descriptive purposes. On the other hand, cumulative prospect theory (CPT) accommodates many empirically observed behavioral features without losing much tractability \citep{tversky1992advances}. 

Two of the most well known notions of equilibrium, Nash equilibrium \citep{nash1951non} and correlated equilibrium \citep{aumann1974subjectivity}, are based on expected utility theory (EUT). \citet{keskin2016equilibrium} defines analogs for both these equilibrium notions based on cumulative prospect theory. He also establishes the existence of such equilibria under certain continuity conditions.

There has been considerable interest in the study of the comparative geometry of Nash and correlated equilibria. Under EUT, it is known that the set of all correlated equilibria is a convex polytope and contains the set of all Nash equilibria. In the paper by \citet{nau2004geometry}, it has been proved that: 
\begin{equation*}
\text{\parbox{.85\linewidth}{the Nash equilibria all lie on the boundary of the correlated equilibrium polytope.}}\tag{P}
\end{equation*}
Further, it has been found that in $2$-player (bimatrix) games, all extremal Nash equilibria are also extremal correlated equilibria \citep{cripps1995extreme,canovas1999nash,evangelista1996note}, although this result does not hold for more than $2$ players \citep{nau2004geometry}. The sets of correlated  and Nash equilibria have been completely characterized for $2 \times 2$ games \citep{calvo2006set}. 

Cumulative prospect theory is known to share common features with expected utility theory. The purpose of this paper is to study how the geometry of equilibrium notions is affected by prospect theoretic preferences. For example, under CPT, it continues to be the case that the set of correlated equilibria contains all Nash equilibria, but the set of correlated equilibria is not guaranteed to be a convex polytope  \citep[Example~2]{keskin2016equilibrium}. The pure Nash equilibria, if they exist, coincide under EUT and CPT \citep[Proposition~2]{keskin2016equilibrium}. It is known that the set of correlated equilibria under CPT includes the set of joint probability distributions induced by the convex hull of the set of pure Nash equilibria \citep[Proposition~3]{keskin2016equilibrium}, as is true under EUT. 

These similarities and differences raise the natural question of whether
property (P) continues to hold or not under CPT.
In fact, we will see that the set of correlated equilibria can be disconnected (section~\ref{sec: conn}). Nevertheless, 
our main result says that property (P) continues to hold under CPT (section~\ref{sec: main}). We also show that for $2 \times 2$ games the set of correlated equilibria under CPT is a convex polytope, and we characterize it (section~\ref{sec: 2x2games}).


\section{Preliminaries and main result}\label{sec: main}

Let $\Gamma = (N,(S_i)_{i \in N},(h_i)_{i \in N})$ be a finite $n$-person normal form game, where $N = \{1,\dots,n\}$ is the set of \emph{players}, $S_i$ is the finite strategy set of player $i \in N$, and $h_i: S_1 \times \dots \times S_n \to \bbR$ is the payoff function for player $i \in N$. Let each player $i \in N$ have at least two strategies, i.e $|S_i| \geq 2,\forall i \in N$. Let the set of all joint strategies be denoted by $S = \prod_{i \in N} S_i$. Let $s_i \in S_i$ denote a pure strategy of player $i \in N$ and let $s = (s_1,\dots,s_n) \in S$ denote a joint strategy of all players. Let $S_{-i} = \prod_{j \in N\back i} S_j$ denote the set of joint strategies $s_{-i} \in S_{-i}$ of all players except player $i$. Let $h_i(s)$ denote the payoff of player $i$ when joint strategy $s$ is played, and let $h_i(d_i,s_{-i})$ denote the payoff to player $i$ when she chooses strategy $d_i \in S_i$ while the others adhere to $s$.

\begin{definition}
	The game $\Gamma$ is \emph{non-trivial} if $h_i(s) \neq h_i(d_i,s_{-i})$ for some player $i \in N$, some $s \in S$, and some $d_i \in S_i$.
\end{definition}

Consider a joint probability distribution $\mu$ on $S$, viewed as a vector in $\bbR^{|S|}$ with coordinates $\mu(s) \geq 0$ for all $s \in S$, and satisfying $\sum_{s \in S} \mu(s) = 1$. Let $\Delta^{k}$ denote the standard $k$-simplex,
\[
	\Delta^k = \{(p_1,\dots,p_{k+1}) \in \bbR^{k+1} | p_1 + \dots + p_{k+1} = 1, p_i \geq 0,\forall 1 \leq i \leq k+1\},
\]
with the usual topology.

\begin{definition}[\cite{aumann1987correlated}]
	A joint probability distribution $\mu \in \Delta^{|S| - 1}$ is said to be a \emph{correlated equilibrium} of $\Gamma$ if it satisfies the following inequalities:
\begin{equation}
	\sum_{s_{-i} \in S_{-i}} \mu(s)(h_i(s_i,s_{-i}) - h_i(d_i,s_{-i})) \geq 0, \text{ for all $i$ and for all } s_i,d_i \in S_i.
\end{equation}
\end{definition}
The set of all correlated equilibria, henceforth denoted as $C_{EUT}$, is a convex polytope which is a proper subset of $\Delta^{|S|-1}$ iff the game is non-trivial. The set $I$ of all joint probability distributions that are of product form is defined by a system of nonlinear constraints, viz.
\[
	 I = \{\mu \in \Delta^{|S|-1}: \mu(s) = \mu_1(s_1) \times \dots \times \mu_n(s_n) \quad \forall \; s \in S\},
\]
where $\mu_i$ denotes the marginal probability distribution on $S_i$ induced by $\mu$. The set of all Nash equilibria is the intersection of $I$ and $C_{EUT}$, which is non-empty by virtue of Nash's existence theorem.

We now give a quick review of cumulative prospect theory (CPT) \citep[for more details see][]{wakker2010prospect}. Each person is associated with a \emph{reference point} $r \in \bbR$, a corresponding \emph{value function} $v^r : \bbR \to \bbR$, and two \emph{probability weighting functions} 
$w^\pm:[0,1] \to [0,1]$, $w^+$ for gains and $w^-$ for losses. The function $v^r(x)$ satisfies:
\begin{inparaenum}[(i)]
	\item it is continuous in $x$;
	\item $v^r(r) = 0$;
    \item it is strictly increasing in $x$.
\end{inparaenum}
The value function is generally assumed to be convex in the losses frame ($x < r$) and concave in the gains frame ($x > r$), and is steeper in the losses frame than in the gains frame in the sense that $v^r(r) - v^r(r -z) \ge v^r(r+z) - v^r(r)$ for all $z \ge 0$. The probability weighting functions $w^\pm: [0,1] \to [0,1]$ satisfy:
\begin{inparaenum}[(i)]
	\item they are continuous;
	\item they are strictly increasing;
	\item $w^\pm(0) = 0$ and $w^\pm(1) = 1$.
\end{inparaenum}

Suppose a person faces a \emph{lottery} (or \emph{prospect})
\[
	L = \{(p_1,z_1);\dots;(p_t,z_t)\},
\]
where $z_j \in \bbR, 1 \leq j \leq t$, denotes an \emph{outcome} and $p_j, 1 \leq j \leq t$, is the probability with which outcome $z_j$ occurs. We assume the lottery to be exhaustive, 
i.e. $\sum_{j=1}^t p_j = 1$ (Note that we are allowed to have $p_j =0$ for some values of $j$.) Let $z = (z_1,\dots,z_t)$ and $p = (p_1\dots,p_t)$. We denote $L$ as $(p,z)$ and refer to the vector $z$ as an \emph{outcome profile}.

Let $a = (a_1,\dots,a_t)$ be a permutation of $(1,\dots,t)$ such that
\begin{equation}\label{eq: order}
	z_{a_1} \geq z_{a_2} \geq \dots \geq z_{a_t}.
\end{equation}
Let $0 \leq j_r \leq t$ be such that $z_{a_j} \geq r$ for $1 \leq j \leq j_r$ and $z_{a_j} < r$ for $j_r < j \leq t$. (Here $j_r = 0$ when $z_{a_j} < r$ for all $1 \leq j \leq t$.) The \emph{CPT value} $V^r(L)$ of the prospect $L$ is evaluated using the value function $v^r(\cdot)$ and the probability weighting functions $w^{\pm}(\cdot)$ as follows:
\begin{equation}\label{eq: CPT_value_discrete}
	V^r(L) := \sum_{j=1}^{j_{r}} \pi_j^+(p,a) v^r(z_{a_j}) + \sum_{j=j_r+1}^t \pi_j^-(p,a) v^r(z_{a_j}),
\end{equation}
where $\pi^+_j(p,a),1 \leq j \leq j_{r}, \pi^-_j(p,a), j_r < j \leq t$, are \emph{decision weights} defined via:
\begin{align*}
	\pi^+_{1}(p,a) &= w^+(p_{a_1}),\\ 
	\pi_j^+(p,a) &= w^+(p_{a_1} + \dots + p_{a_{j}}) - w^+(p_{a_1} + \dots + p_{a_{j-1}}) &\text{ for } &1 < j \leq t, \\
	 \pi_j^-(p,a) &= w^-(p_{a_t} + \dots + p_{a_j}) - w^-(p_{a_t} + \dots + p_{a_{j+1}}) &\text{ for } &1 \leq j < t,\\
	 \pi^-_{t}(p,a) &= w^-(p_{a_t}). 
\end{align*}
Although the expression on the right in equation~(\ref{eq: CPT_value_discrete}) depends on the permutation $a$, one can check that the formula evaluates to the same value $V^r(L)$ as long as the permutation $a$ satisfies (\ref{eq: order}). The CPT value in equation~(\ref{eq: CPT_value_discrete}) can equivalently be written as:
\begin{align}\label{eq: CPT_value_cumulative}
	V^r(L) &= \sum_{j = 1}^{j_r - 1} w^+\l(\sum_{i = 1}^j p_{a_j}\r)\l[v^r(z_{a_j}) - v^r(z_{a_{j+1}})\r] \nonumber\\
	&+ w^+\l(\sum_{i = 1}^{j_r} p_{a_j}\r)v^r\l(z_{a_{j_r}}\r) + w^-\l(\sum_{i = j_r + 1}^{t} p_{a_j}\r)v^r(z_{a_{j_r+1}}) \nonumber \\
	&+ \sum_{j = j_r + 1}^{t-1} w^-\l(\sum_{i = j+1}^t p_{a_j}\r)\l[v^r(z_{a_{j+1}}) - v^r(z_{a_{j}})\r].
\end{align}

A person is said to have CPT preferences if, given a choice between prospect $L_1$ and prospect $L_2$, she chooses the one with higher $CPT$ value. 

CPT satisfies \emph{strict stochastic dominance} \citep{chateauneuf1999axiomatization}: shifting positive probability mass from an outcome to a strictly preferred outcome leads to a strictly preferred prospect. For example, the prospect $L_1 = \{(0.6,40);(0.4,20)\}$ can be obtained from the prospect $L_2 = \{(0.5,40);(0.5,20)\}$ by shifting a probability mass of $0.1$ from outcome $20$ to a strictly better outcome $40$. The strict stochastic dominance condition says that $V^r(L_1) > V^r(L_2)$ (see equation~(\ref{eq: CPT_value_cumulative})).

Also, CPT satisfies \emph{strict monotonicity} \citep{chateauneuf1999axiomatization}: any prospect becomes strictly better as soon as one of its outcomes is strictly improved. For example, if $L_1 = \{(0.6,40);(0.4,-10)\}$ and $L_2 = \{(0.6,40);(0.4,-20)\}$, then $V^r(L_1) > V^r(L_2)$ (see equation~(\ref{eq: CPT_value_discrete})).

We now describe the notion of correlated equilibrium incorporating CPT preferences, as defined by \citet{keskin2016equilibrium}. Let $\{v_i^r(\cdot),r \in \bbR\}$ be a family of value functions, one for each reference point, and $w_i^\pm(\cdot)$ be the probability weighting functions for each player $i \in N$. 
We assume that $v_i^r(x)$ is continuous in $x$ and $r$ for each $i$.
For every player $i \in N$, let the reference point be determined by a continuous function $r_i:\Delta^{|S|-1} \to \bbR$. 
Let $V_i^r(L)$ denote the CPT value of a lottery $L$ evaluated by player $i$, using the value function $v_i^r(\cdot)$ and probability weighting functions $w_i^\pm(\cdot)$.

 For a joint distribution $\mu \in \Delta^{|S|-1}$, let 
\[
	\mu_i(s_i) = \sum_{s_{-i} \in S_{-i}} \mu(s_i,s_{-i})
\] 
be the marginal distribution of player $i$, and for $s_i$ such that $\mu_i(s_i) > 0$ let
\[
	\mu_{-i}^{s_i}(s_{-i}) = \frac{\mu(s_i,s_{-i})}{\mu_i(s_i)}
\]
be the conditional distribution on $S_{-i}$.

If player $i$ observes a signal to play $s_i$ drawn from the joint distribution $\mu$, and if she decides to deviate to a strategy $d_i \in S_i$, then she will face the lottery
\[
	L(\mu,s_i,d_i) := \l\{ \l(\mu_{-i}^{s_i}(s_{-i}), h_i(d_i,s_{-i}) \r)\r\}_{s_{-i} \in S_{-i}}.
\]

\begin{definition}[\cite{keskin2016equilibrium}]
	A joint probability distribution $\mu \in \Delta^{|S|-1}$ is said to be a \emph{CPT correlated equilibrium} of $\Gamma$ if it satisfies the following inequalities for all $i$ and for all $s_i,d_i \in S_i$ such that $\mu_i(s_i) > 0$:
	\begin{equation}\label{eq: CPT_corr_ineq}
		V_i^{r_i(\mu)}(L(\mu,s_i,s_i)) \geq V_i^{r_i(\mu)}(L(\mu,s_i,d_i)). 
	\end{equation}
\end{definition}

For any fixed reference point $r$, since the value function $v^r(\cdot)$ is assumed to be strictly increasing, one can check that 
two outcome profiles $x$ and $y$ 
have equal CPT value under all probability distributions $p$,
i.e. $V^r(p,x) = V^r(p,y)$ for all $p$,
iff $x = y$. 
It then follows that the set $C_{CPT}$ is a proper subset of $\Delta^{|S|-1}$ iff the game is non-trivial.

We now describe the notion of CPT Nash equilibrium as defined by Keskin\footnote{Keskin defines CPT equilibrium assuming $w^+(\cdot) = w^-(\cdot)$. However, the definition can be easily extended to our general setting and the proof of existence goes through without difficulty.} \cite{keskin2016equilibrium}. For a mixed strategy $\mu \in I$, if player $i$ decides to play $s_i$, drawn from the distribution $\mu_i$, then she will face the lottery
\[
	L(\mu_{-i},s_i) := \l\{\l(\mu_{-i}(s_{-i}), h_i(s_i,s_{-i}) \r)\r\}_{s_{-i} \in S_{-i}},
\]
where $\mu_{-i}(s_{-i}) = \prod_{j \neq i} \mu_j(s_j)$ plays the role of $\mu^{s_i}_{-i}(s_{-i})$,
which does not depend on $s_i$.
Suppose player $i$ decides to deviate and play a mixed strategy $\mu'_i$ while the rest of the players continue to play $\mu_{-i}$. Then define
the average CPT value for player $i$ by
\[
	A_i^{CPT}(\mu'_i,\mu_{-i}) = \sum_{s_i \in S_i} \mu'_i(s_i)V_i^{r_i(\mu)}(L(\mu_{-i},s_i)).
\]
The best response of player $i$ to a mixed strategy $\mu \in I$ is defined as
\[
	BR_i^{CPT}(\mu) := \l\{\mu^*_i \in \Delta^{|S_i|-1} | \forall \mu'_i \in \Delta^{|S_i|-1}, A_i^{CPT}(\mu^*_i,\mu_{-i}) \geq A_i^{CPT}(\mu'_i,\mu_{-i}) \r\}.
\]
\begin{definition}[\cite{keskin2016equilibrium}]
	A mixed strategy $\mu^* \in I$, is a CPT Nash equilibrium iff
	\[
		\mu^*_i \in BR_i^{CPT}(\mu^*) \text{ for all } i.
	\]
	We call $\mu^*$ a pure or mixed CPT Nash equilibrium depending on $\mu^*$ being a pure or mixed strategy respectively.
\end{definition}

The set of all CPT correlated equilibria, henceforth denoted by $C_{CPT}$, is no longer guaranteed to be a convex polytope \citep[Example~2]{keskin2016equilibrium}. The set of all CPT Nash equilibria is the intersection of $I$ and $C_{CPT}$ \citep[Proposition~1]{keskin2016equilibrium}) and is non-empty \citep[Theorem~1]{keskin2016equilibrium}. We are interested in studying the geometry of this intersection. It should be noted that the set $C_{CPT}$ depends on the choice of the reference functions $r_i(\mu)$, as does the set of CPT Nash equilibria.

In the case of traditional utility-theoretic equilibria, it has been proved that
\begin{proposition}[\cite{nau2004geometry}]
	In any finite, non-trivial game, the Nash equilibria are on the boundary of the polytope of correlated equilibria when it is viewed as a subset of the smallest affine space containing all joint probability distributions.
\end{proposition}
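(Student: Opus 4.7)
The plan is to prove that any Nash equilibrium $\mu=\mu_1\times\cdots\times\mu_n$ fails to lie in the relative interior of $C_{EUT}$ when the latter is viewed inside $\mathrm{aff}(\Delta^{|S|-1})$. The central algebraic reduction I would use is that at a product distribution $\mu$, the correlated equilibrium inequality indexed by $(i,s_i,d_i)$ collapses to
\[
\mu_i(s_i)\bigl[U_i(s_i,\mu_{-i})-U_i(d_i,\mu_{-i})\bigr]\geq 0,
\]
where $U_i(\cdot,\mu_{-i})$ denotes expected payoff against the marginal $\mu_{-i}$. I would then split the argument into two cases based on whether $\mu$ has full support on $S$.

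First I would handle the case in which some $\mu_i(s_i)=0$. Then $\mu(s_i,s_{-i})=0$ for every $s_{-i}\in S_{-i}$, so $\mu$ lies on the relative boundary of $\Delta^{|S|-1}$. Since $C_{EUT}\subseteq\Delta^{|S|-1}$, any relative-interior point of $C_{EUT}$ in $\mathrm{aff}(\Delta^{|S|-1})$ must also be a relative-interior point of $\Delta^{|S|-1}$, forcing $\mu\in\partial C_{EUT}$. Next I would handle the case in which $\mu$ has full support on $S$. Then every pure strategy of every player is a best response to $\mu_{-i}$, so $U_i(s_i,\mu_{-i})=U_i(d_i,\mu_{-i})$ for all $i,s_i,d_i$, and every CE inequality is tight at $\mu$. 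Non-triviality supplies some $(i^*,s_i^*,d_i^*)$ with $h_{i^*}(s_i^*,\cdot)\not\equiv h_{i^*}(d_i^*,\cdot)$, so the linear functional
\[
f(\nu):=\sum_{s_{-i^*}\in S_{-i^*}}\nu(s_i^*,s_{-i^*})\bigl(h_{i^*}(s_i^*,s_{-i^*})-h_{i^*}(d_i^*,s_{-i^*})\bigr)
\]
is not identically zero on $\mathrm{aff}(\Delta^{|S|-1})$. I would conclude by a dichotomy: if $f$ is not identically zero on $C_{EUT}$, then $\{f=0\}$ is a proper supporting hyperplane of $C_{EUT}$ at $\mu$, so $\mu\in\partial C_{EUT}$; if $f$ vanishes on $C_{EUT}$, then $C_{EUT}$ is confined to a proper affine subspace of $\mathrm{aff}(\Delta^{|S|-1})$, its relative interior is empty, and again $\mu\in\partial C_{EUT}$.

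The hard part will be the fully mixed subcase, where the fact that all CE inequalities are tight at $\mu$ does not by itself place $\mu$ on the boundary: one must carefully separate pointwise tightness at $\mu$ from identical tightness on the entire polytope. Non-triviality is exactly what rules out the degenerate scenario in which all tight inequalities are identically tight on $C_{EUT}$, and the dichotomy above converts this into either a proper supporting hyperplane or a dimensional collapse of $C_{EUT}$, both of which yield the desired conclusion without recourse to LP duality or explicit perturbation arguments.
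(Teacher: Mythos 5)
Your proof is correct, and it is essentially the argument the paper relies on (citing \citet{nau2004geometry}): a non-fully-mixed Nash equilibrium lies on the boundary of $\Delta^{|S|-1}$, while at a fully mixed one every incentive constraint is tight, and non-triviality makes some tight constraint a nonconstant linear functional on the ambient affine space, so that either it supports $C_{EUT}$ at $\mu$ (a face of the full-dimensional polytope) or $C_{EUT}$ collapses into a proper affine subspace, where the claim holds trivially. This matches the paper's remark that it suffices to place Nash equilibria on a face when $C_{EUT}$ is full-dimensional and that the non-full-dimensional case is immediate.
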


Since the set of correlated equilibria $C_{EUT}$ is a convex polytope, it is enough to prove that the Nash equilibria lie on one of the faces of $C_{EUT}$ if $C_{EUT}$ is full-dimensional, i.e. has dimension $|S| - 1$, when it is viewed as a subset of the affine space containing $\Delta^{|S|-1}$, and the statement is trivially true if it is not full-dimensional. When the set $C_{EUT}$ is not full-dimensional, it is possible for the Nash equilibria to lie in the relative interior of the set $C_{EUT}$ \citep[Proposition 2]{nau2004geometry}.

We now extend the above proposition for equilibria with CPT preferences. The proof is quite different since in general $C_{CPT}$ is not a convex polytope, as shown in Example~\ref{ex: disconn}, Section~\ref{sec: conn} below \citep[see also][Example~2]{keskin2016equilibrium}.

\begin{proposition}\label{prop: main}
	In any finite, non-trivial game, the CPT Nash equilibria are on the boundary of the set of CPT correlated equilibria set when it is viewed as a subset of the smallest affine space containing all joint probability distributions.
\end{proposition}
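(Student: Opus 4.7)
The plan is to argue by contradiction that $\mu^*$ cannot lie in the relative interior of $C_{CPT}$ within the affine hull $A$ of $\Delta^{|S|-1}$. If $\mu^*(s_0) = 0$ for some $s_0 \in S$, then $\mu^*$ is immediately on the boundary: every ball around $\mu^*$ in $A$ contains points with negative coordinate at $s_0$, which lie outside $\Delta^{|S|-1}$ and hence outside $C_{CPT}$. So I assume $\mu^*(s) > 0$ for all $s \in S$.

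Under this assumption, every pure strategy $s_i$ has $\mu^*_i(s_i) > 0$ for every player $i$. Because $A_i^{CPT}(\cdot, \mu^*_{-i})$ is linear in $i$'s mixed strategy, a best response must concentrate on pure strategies that maximize the CPT value $V_i^{r^*}(L(\mu^*_{-i}, s_i))$ where $r^* := r_i(\mu^*)$; full support therefore forces
\[
V_i^{r^*}(L(\mu^*_{-i}, s_i)) = V_i^{r^*}(L(\mu^*_{-i}, d_i)) \quad \text{for all } i, \ s_i, d_i \in S_i,
\]
so that every CPT correlated-equilibrium inequality (\ref{eq: CPT_corr_ineq}) holds with equality at $\mu^*$. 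By non-triviality, pick $(i, s_i^*, d_i^*)$ with $X := h_i(s_i^*, \cdot) \not\equiv Y := h_i(d_i^*, \cdot)$ as functions on $S_{-i}$.

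Now consider perturbations $\mu_\epsilon := \mu^* + \epsilon \nu$ with $\nu$ supported on $\{s_i^*\} \times S_{-i}$ and $\sum_{s_{-i}} \nu(s_i^*, s_{-i}) = 0$. These lie in $A$, preserve the marginal $\mu^*_i$ and the conditionals for strategies $d_i \neq s_i^*$, and move only the conditional of $s_{-i}$ given $s_i^*$; writing $q_\epsilon$ for this conditional and $q_0 := \mu^*_{-i}$, one has $q_\epsilon = q_0 + \epsilon \tau$ for some signed measure $\tau$ on $S_{-i}$ with $\sum \tau = 0$. The key claim is that some such $\tau$ makes $V_i^{r^*}(q_\epsilon, X) < V_i^{r^*}(q_\epsilon, Y)$ for all small $\epsilon > 0$. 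If $X, Y$ are not comonotone on $S_{-i}$, pick $\alpha, \beta$ with $X(\alpha) > X(\beta)$ and $Y(\alpha) < Y(\beta)$ and take $\tau = e_\beta - e_\alpha$: strict stochastic dominance of CPT yields $V_i^{r^*}(q_\epsilon, X) < V_i^{r^*}(q_0, X) = V_i^{r^*}(q_0, Y) < V_i^{r^*}(q_\epsilon, Y)$. If $X, Y$ are comonotone under a common sorting $a$, the cumulative form gives
\[
V_i^{r^*}(q, X) - V_i^{r^*}(q, Y) = c_t + \sum_{j=1}^{t-1} w^{\pm}(F_j(q))(c_j - c_{j+1}),
\]
with $c_j := v_i^{r^*}(X_{a_j}) - v_i^{r^*}(Y_{a_j})$ and $F_j(q) := \sum_{k \le j} q_{a_k}$; since $X \not\equiv Y$ and $v_i^{r^*}$ is strictly increasing, not all $c_j$'s are equal, so some $c_k \ne c_{k+1}$, and shifting mass between $a_k$ and $a_{k+1}$ alters only $F_k$, rendering the expression strictly monotonic in that direction by strict monotonicity of $w^{\pm}$; choose the sign of the shift to push it below zero.

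The chief technical obstacle is that the actual inequality at $\mu_\epsilon$ is evaluated at the perturbed reference $r_\epsilon := r_i(\mu_\epsilon)$ rather than at $r^*$. Continuity of $r_i$ and of $v_i^r$ in $r$ ensures that the correction $V_i^{r_\epsilon}(q_\epsilon, \cdot) - V_i^{r^*}(q_\epsilon, \cdot)$ vanishes as $\epsilon \to 0$; the delicate part is to verify that for a fixed direction $\tau$, the strict $q$-driven negativity of $V_i^{r^*}(q_\epsilon, X) - V_i^{r^*}(q_\epsilon, Y)$ dominates this reference-point correction for sufficiently small $\epsilon > 0$. Once this is established, $\mu_\epsilon \notin C_{CPT}$ for such $\epsilon$, contradicting the assumption that $\mu^*$ is in the relative interior.
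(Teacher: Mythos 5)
Your skeleton matches the paper's: the non-fully-mixed case is dispatched via the boundary of $\Delta^{|S|-1}$, full support forces all inequalities in (\ref{eq: CPT_corr_ineq}) to hold with equality, non-triviality supplies a pair $X \neq Y$, and you perturb only the conditional given $s_i^*$, splitting into the not-comonotone case (handled by strict stochastic dominance, exactly as in the paper) and the comonotone case. The genuine gap is precisely the step you defer at the end: the claim that the $q$-driven negativity of $V_i^{r^*}(q_\epsilon,X) - V_i^{r^*}(q_\epsilon,Y)$ dominates the reference-point correction $V_i^{r_\epsilon}(q_\epsilon,\cdot) - V_i^{r^*}(q_\epsilon,\cdot)$ for small $\epsilon$. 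Both quantities vanish as $\epsilon \to 0$, and continuity of $r_i$ and of $v_i^r$ in $r$ provides no modulus whatsoever, so there is no way to compare their rates; "Once this is established" cannot be established by soft continuity, and the proof is incomplete at its crucial point. The paper avoids any rate comparison: Lemma~\ref{lem: CPT_corr_regret_dir} produces a single direction $\delta$ along which the regret $R^r(p+\epsilon\delta,x,y)$ strictly decreases \emph{for every reference point $r$ simultaneously}, with the admissible range of $\epsilon$ also independent of $r$ (Remark~\ref{rem: lemma_nodepend}). It then works directly at the perturbed reference $r_i(\bar\mu)$: if $R_i^{r_i(\bar\mu)}(\hat\mu_{-i},x,y) \leq 0$, the lemma applied at $r = r_i(\bar\mu)$ makes the $s_i \to d_i$ constraint strictly violated at $\bar\mu$; if instead $R_i^{r_i(\bar\mu)}(\hat\mu_{-i},x,y) > 0$, then the reverse constraint $d_i \to s_i$ is violated at the \emph{unchanged} conditional $\bar\mu_{-i}^{d_i} = \hat\mu_{-i}$. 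No continuity in $r$ is ever invoked.

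Your comonotone case also has defects that block repairing the argument into such an $r$-uniform statement. First, "not all $c_j$'s are equal" does not follow from $X \not\equiv Y$: all $c_j$ equal to a common nonzero value occurs exactly when one profile strictly pointwise dominates the other (e.g.\ $Y = X - \text{const}$ with $v$ the identity), and then no adjacent pair $c_k \neq c_{k+1}$ exists. You must first exclude pointwise dominance, as the paper does, using strict monotonicity together with the established equality $V_i^{r^*}(q_0,X) = V_i^{r^*}(q_0,Y)$. Second, your displayed formula is the one-sided cumulative form; when $r^*$ splits gains from losses, the loss side carries $w^-(1-F_j)$ rather than $w^\pm(F_j)$, and the split index can differ between $X$ and $Y$ — this is exactly why the paper's lemma requires the subcase analysis 1a--1e over positions of $r$. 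Third, your shift direction between adjacent ranks $a_k, a_{k+1}$, with sign "chosen to push below zero," depends on the sign of $c_k - c_{k+1} = [v^r(X_{a_k}) - v^r(X_{a_{k+1}})] - [v^r(Y_{a_k}) - v^r(Y_{a_{k+1}})]$, which can flip as $r$ varies since $v^r$ is nonlinear; the paper instead selects a crossing pair $x_{j_1} > y_{j_1}$, $x_{j_2} < y_{j_2}$ with equal coordinates in between, whose value differences have $r$-independent signs — precisely the feature that makes its direction uniform in $r$ and the whole argument close.
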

We first prove a lemma which in itself is an interesting property of cumulative prospect theoretic preferences. Let $V^r(\cdot)$ denote the CPT value evaluated with respect to a value function $v^r(\cdot)$ and probability weighting functions $w^\pm(\cdot)$ with respect to a reference point $r \in \bbR$. Let $x = (x_1,\dots,x_t)$ and $y = (y_1,\dots,y_t)$ be two outcome profiles and $p = (p_1,\dots,p_t)$ be a  probability distribution. The prospect $(p,x)$ is said to \emph{pointwise dominate} the prospect $(p,y)$ if $x_j \geq y_j$ for all $j$ such that $p_j > 0$. Further, if the inequality $x_j \geq y_j$ holds strictly for at least one $j$ with $p_j > 0$ then the prospect $(p,x)$ is said to \emph{strictly pointwise dominate} the prospect $(p,y)$. Let the \emph{regret} corresponding to choosing $y$ instead of $x$ be denoted by
\begin{equation}\label{eq: regret}
	R^r(p,x,y) := V^r(p,x) - V^r(p,y).
\end{equation}
Prospects $(p,x)$ and $(p,y)$ are said to be \emph{similarly ranked} if there exists a permutation $(a_1,\dots,a_{t'})$ of $T' :=\{j \in \{1,\dots,t\}| p_j > 0\}$ such that
\[
	x_{a_1} \geq \dots \geq x_{a_{t'}} \text{ and } y_{a_1} \geq \dots \geq y_{a_{t'}}.
\]

\begin{lemma}\label{lem: CPT_corr_regret_dir}
In the above setting, suppose the prospects $(p,x)$ and $(p,y)$ satisfy either of the following:
\begin{enumerate}[(i)]
 	\item they are not similarly ranked or,
 	\item neither of them dominates the other,
 \end{enumerate} 
 then there exists a direction $\delta = (\delta_1,\dots,\delta_t)$ with $\sum_{j = 1}^t \delta_j = 0$ and $\delta_j = 0$ for $j \notin T'$ such that
 \begin{equation}\label{eq: reg_dir}
  	R^r(p + \epsilon \delta, x, y) < R^r(p , x, y)
  \end{equation} 
 for all $r \in \bbR$, for all $\epsilon > 0$ such that $p + \epsilon \delta \in \Delta^{t-1}$.
\end{lemma}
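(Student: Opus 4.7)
The plan is to handle the two hypotheses separately, in each case constructing the direction $\delta$ explicitly.

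For case~(i), since $(p,x)$ and $(p,y)$ are not similarly ranked there must exist $j,k \in T'$ with $x_j > x_k$ and $y_j < y_k$. I would take $\delta_j = -1$, $\delta_k = +1$, and $\delta_\ell = 0$ for all other $\ell$. For any admissible $\epsilon > 0$ the update $p \mapsto p + \epsilon\delta$ shifts $\epsilon$ of probability mass from index $j$ to index $k$. Viewed on the outcome profile $x$ this moves mass to a strictly worse outcome, so strict stochastic dominance gives $V^r(p+\epsilon\delta,x) < V^r(p,x)$ for every $r$; viewed on $y$ it moves mass to a strictly better outcome, so $V^r(p+\epsilon\delta,y) > V^r(p,y)$. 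Subtracting gives (\ref{eq: reg_dir}), uniformly in $r$.

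For case~(ii), if (i) also applies I reuse the above, so I may assume $(p,x)$ and $(p,y)$ are similarly ranked. After relabeling I take $T' = \{1,\dots,t'\}$ with $x_1 \ge \dots \ge x_{t'}$ and $y_1 \ge \dots \ge y_{t'}$. Non-dominance forces both signs of $x_k - y_k$ to appear on $T'$, and a short sign-pattern argument then yields consecutive indices $j, j+1 \in T'$ with $x_j > y_j$ and $x_{j+1} \le y_{j+1}$ (or the mirrored pattern, handled symmetrically): indeed, if every strict $+$ propagated to a strict $+$ at the next position and likewise for $-$, no sign switch could ever occur. I would then take $\delta_j = -1$, $\delta_{j+1} = +1$, zero elsewhere.

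The reason for choosing \emph{consecutive} indices is that the shift then perturbs only the single cumulative probability $p_1 + \dots + p_j$, by $-\epsilon$, and leaves every other cumulative sum fixed. Substituting into the cumulative form (\ref{eq: CPT_value_cumulative}) localizes the changes in both $V^r(p,x)$ and $V^r(p,y)$ to the single ``bucket'' around index $j$. A case split on the position of $r$ relative to $\{x_j, x_{j+1}, y_j, y_{j+1}\}$ determines whether $j$ lies below, at, or above the gain-loss pivot $j_r$ for each of the two profiles. The enforced sort-order $x_j > y_j \ge y_{j+1} \ge x_{j+1}$ rules out several of these combinations; in each survivor, the strict monotonicity of $v^r$ and of $w^{\pm}$, together with the signs of $v^r$ forced by comparison to $r$, combine to make the resulting change in regret strictly negative, uniformly in $r$ and in admissible $\epsilon$.

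The main obstacle is the boundary sub-case in which $j$ coincides with the gain-loss pivot for $x$ or for $y$: there both a $w^+$-contribution and a $w^-$-contribution change simultaneously and must be summed. Resolving it relies on $v^r(r) = 0$ together with the sign constraints on $v^r$ implied by inequalities such as $x_{j+1} < r \le x_j$, which conspire to keep the combined expression strictly negative for every reference point $r$ at once.
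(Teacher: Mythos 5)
Your proposal is correct and takes essentially the same route as the paper: a probability transfer $\delta$ between two indices chosen from the order structure of $x$ and $y$, with strict stochastic dominance disposing of the not-similarly-ranked case and a case split on the location of $r$ relative to the four perturbed outcomes (using strict monotonicity of $w^{\pm}$, the signs of $v^r$, and $v^r(r)=0$ in the straddling subcases) disposing of the non-dominance case. The only difference is cosmetic: the paper selects a possibly non-adjacent pair $j_1,j_2$ with $x_{j_1}>y_{j_1}$, $x_{j_2}<y_{j_2}$ and $x_j=y_j$ for all $j$ in between (so both end inequalities are strict and the intermediate decision-weight changes cancel between the two prospects), whereas you take adjacent indices with one inequality possibly weak, which makes the localization of the regret change automatic while the single strict inequality still forces a strict decrease in every reference-point subcase.
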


\begin{proof}
	We observe that it is enough to prove the claim for the case when $p_j > 0$ for all $1\leq j\leq t$ because if not, then we can let $x',y'$ and $p'$ be respectively the vectors $x,y$ and $p$ restricted to the coordinates in $T'$ and then use the result. WLOG let the ordering be such that $x_1 \geq \dots \geq x_t$. Let $\delta(j_1,j_2)$ correspond to transferring probability from $j_1$ to $j_2$, i.e. for all $1\leq j \leq t$,
	\[
		\delta_j(j_1,j_2) = \begin{cases}
			1 \text{ if } j = j_2,\\
			-1 \text{ if } j = j_1,\\
			0 \text{ otherwise }.
		\end{cases} 
	\]
	Suppose (i) holds. Then there exists $j_1 < j_2$ (and hence $x_{j_1} \geq x_{j_2}$) such that $y_{j_1} < y_{j_2}$. Now, by the strict stochastic dominance property of CPT we have 
	\[
		V^r(p + \epsilon \delta,x) \leq V^r(p,x) \text{ and } V^r(p,y) < V^r(p + \epsilon \delta,y),
	\]
	where $\delta$ denotes $\delta(j_1,j_2)$,
	and  hence (\ref{eq: reg_dir}) follows.
	
	Now suppose $(p,x)$ and $(p,y)$ are similarly ranked. WLOG let the ordering be such that $x_1 \geq \dots \geq x_t$ and $y_1 \geq \dots \geq y_t$. Suppose (ii) holds. Then there exist $j_1,j_2$ such that $x_{j_1} > y_{j_1}$ and $x_{j_2} < y_{j_2}$. In fact, one can find $j_1, j_2$ such that $x_{j_1} > y_{j_1},  x_{j} = y_j$ for all $j$ between $j_1$ and $j_2$, and $x_{j_2} < y_{j_2}$. Depending on the order of $j_1$ and $j_2$ we have the following two cases (note $j_1 \neq j_2$):

	\noindent
	\emph{Case} 1 ($j_1 < j_2$): Then we have the ordering $x_{j_2} < y_{j_2} \leq y_{j_1} < x_{j_1}$. Let $\delta = \delta(j_1,j_2)$. Then it follows from the strict monotonicity of the functions $w_i^\pm(\cdot)$ and the definition of decision weights that
    \begin{align*}
    \pi^+_{j_1}(p + \epsilon \delta) - \pi^+_{j_1}(p) &< 0,\\
    \pi^+_{j_2}(p + \epsilon \delta) - \pi^+_{j_2}(p) &> 0,\\
    \pi^-_{j_1}(p + \epsilon \delta) - \pi^-_{j_1}(p) &< 0,\\
    \pi^-_{j_2}(p + \epsilon \delta) - \pi^-_{j_2}(p) &> 0.
    \end{align*}
(We suppress the dependence of $\pi^\pm_j(p,a)$ on the permutation $a$ since we have assumed $x$ and $y$ to be ordered.) Depending on the position of the reference point $r$, we have the following subcases:

	\noindent
	\emph{Subcase} 1a ($r \leq x_{j_2}$):
	\begin{align*}
		&[V^r(p + \epsilon \delta,x) - V^r(p + \epsilon \delta,y)] - [V^r(p, x) -V^r(p, y)]\\
		&= [\pi^+_{j_1}(p + \epsilon \delta) - \pi^+_{j_1}(p)][v^r(x_{j_1}) - v^r(y_{j_1})]\\
		 &+ [\pi^+_{j_2}(p + \epsilon \delta) - \pi^+_{j_2}(p)][v^r(x_{j_2}) - v^r(y_{j_2})], 
	\end{align*}
	because $\pi^+_{j}(p + \epsilon \delta) = \pi^+_{j}(p)$ for all $j \notin \{j_1,\dots,j_2\}$ and $v^r(x_j) = v^r(y_j)$ for all $j_1 < j < j_2$.  Since $v^r(x_{j_1}) - v^r(y_{j_1}) > 0$ and $v^r(x_{j_2}) - v^r(y_{j_2}) < 0$ we get (\ref{eq: reg_dir}).
	
	\noindent
	\emph{Subcase} 1b ($x_{j_2} < r \leq y_{j_2}$):
	\begin{align*}
		&[V^r(p + \epsilon \delta,x) - V^r(p + \epsilon \delta,y)]- [V^r(p, x) -V^r(p, y)]\\
		&= [\pi^+_{j_1}(p + \epsilon \delta) - \pi^+_{j_1}(p)][v^r(x_{j_1}) - v^r(y_{j_1})] + [\pi^-_{j_2}(p + \epsilon \delta) - \pi^-_{j_2}(p)]v^r(x_{j_2}) \\
		& \quad \quad - [\pi^+_{j_2}(p + \epsilon \delta) - \pi^+_{j_2}(p)]  v^r(y_{j_2}).
	\end{align*}
	Now $v^r(x_{j_1}) - v^r(y_{j_1}) > 0$,  $v^r(x_{j_2}) < 0$, $v^r(y_{j_2}) > 0$ and the result follows.
	
	\noindent
	\emph{Subcase} 1c ($y_{j_2} < r \leq y_{j_1}$):
	\begin{align*}
		&[V^r(p + \epsilon \delta,x) - V^r(p + \epsilon \delta,y)]- [V^r(p, x) - V^r(p, y)]\\
		&= [\pi^+_{j_1}(p + \epsilon \delta) - \pi^+_{j_1}(p)][v^r(x_{j_1}) - v^r(y_{j_1})] \\
		&+ [\pi^-_{j_2}(p + \epsilon \delta) - \pi^-_{j_2}(p)][v^r(x_{j_2}) -  v^r(y_{j_2})].
	\end{align*}
	Now $v^r(x_{j_1}) - v^r(y_{j_1}) > 0$, $v^r(x_{j_2}) - v^r(y_{j_2}) < 0$ and the result follows.

	\noindent
	\emph{Subcase} 1d ($y_{j_1} < r \leq x_{j_1}$):
	\begin{align*}
		&[V^r(p + \epsilon \delta,x) - V^r(p + \epsilon \delta,y)]- [V^r(p, x) -V^r(p, y)]\\
		&= [\pi^+_{j_1}(p + \epsilon \delta) - \pi^+_{j_1}(p)]v^r(x_{j_1}) - [\pi^-_{j_1}(p + \epsilon \delta) - \pi^-_{j_1}(p)]v^r(y_{j_1}) \\
		& \quad \quad + [\pi^-_{j_2}(p + \epsilon \delta) - \pi^-_{j_2}(p)][v^r(x_{j_2}) -  v^r(y_{j_2})].
	\end{align*}
	Now $v^r(x_{j_1}) > 0$, $v^r(y_{j_1}) < 0$,  $v^r(x_{j_2}) - v^r(y_{j_2}) < 0$ and the result follows.

	\noindent
	\emph{Subcase} 1e ($x_{j_i} < r$):
	\begin{align*}
		&[V^r(p + \epsilon \delta,x) - V^r(p + \epsilon \delta,y)]- [V^r(p, x) -V^r(p, y)]\\
		&= [\pi^-_{j_1}(p + \epsilon \delta) - \pi^-_{j_1}(p)][v^r(x_{j_1}) - v^r(y_{j_1})] \\
		&+ [\pi^-_{j_2}(p + \epsilon \delta) - \pi^-_{j_2}(p)][v^r(x_{j_2}) - v^r(y_{j_2})]. 
	\end{align*}
	Now $v^r(x_{j_1}) - v^r(y_{j_1}) > 0$,  $v^r(x_{j_2}) - v^r(y_{j_2}) < 0$ and the result follows.

	Case 2 ($j_1 > j_2$) implies the order $y_{j_1} < x_{j_1} \leq x_{j_2} < y_{j_2}$. Taking $\delta = \delta(j_2,j_1)$, each of the subcases depending on the position of the reference point can be handled as in case 1.
\end{proof}
\begin{remark}\label{rem: lemma_nodepend}
	The vector $\delta$ used in the proof above depends only on the prospects $(p,x)$ and $(p,y)$ and not on the reference point $r$. In fact, it depends only on the order structure of the vectors $x$ and $y$ and not on the probability distribution vector $p$ as long as $p_j > 0$ for all $1 \leq j \leq t$. Also, the range of $\epsilon$ for which the claim holds depends only on the prospects $(p,x)$ and $(p,y)$ and not on the reference point $r$. Lemma~\ref{lem: CPT_corr_regret_dir} can be extended to more general CPT settings as in \citet{chateauneuf1999axiomatization}, where the outcome space is assumed to be a connected topological space instead of monetary outcomes in $\bbR$.
\end{remark}


\begin{proof}[Proof of proposition~\ref{prop: main}]
	If a CPT Nash equilibrium $\hat \mu$ is not completely mixed, i.e. there is a player $i$ and a strategy $s_i \in S_i$, such that $\hat \mu_i(s_i) = 0$, then $\hat \mu$ assigns zero probability to one or more joint strategies and hence lies on the boundary of $\Delta^{|S|-1}$ and thus also on the boundary of $C_{CPT}$. 

	Suppose now that $\hat \mu \in I \cap C_{CPT}$ is completely mixed. Then the inequalities~(\ref{eq: CPT_corr_ineq}) hold for all $i$ and for all $s_i,d_i \in S_i$. In particular, for any pair $s_i,d_i \in S_i$ we have
	\begin{align*}
		V_i^{r_i(\hat \mu)}&\l(\l\{\l( \hat \mu_{-i}^{s_i}(s_{-i}), h_i(s_i,s_{-i})\r) \r\}_{s_{-i} \in S_{-i}}\r) \\
        &\geq V_i^{r_i(\hat \mu)}\l(\l\{\l( \hat  \mu_{-i}^{s_i}(s_{-i}), h_i(d_i,s_{-i}) \r) \r\}_{s_{-i} \in S_{-i}}\r),\\
		V_i^{r_i(\hat \mu)}&\l(\l\{\l( \hat \mu_{-i}^{d_i}(s_{-i}), h_i(d_i,s_{-i})\r) \r\}_{s_{-i} \in S_{-i}}\r) \\
        &\geq V_i^{r_i(\hat \mu)}\l(\l\{\l( \hat \mu_{-i}^{d_i}(s_{-i}), h_i(s_i,s_{-i}) \r)\r\}_{s_{-i} \in S_{-i}}\r).
	\end{align*}
	However, since $\hat \mu \in I$, we have $\hat \mu_{-i} := \hat \mu_{-i}^{s_i} = \hat \mu_{-i}^{d_i}$ and hence the above inequalities are in fact equalities. The same is true for all the inequalities~(\ref{eq: CPT_corr_ineq}).

	Since the game is non-trivial, there exist $i \in N$ and $s_i,d_i \in S_i$ such that $h_i(s_i,s_{-i}) \neq h_i(d_i,s_{-i})$ for some $s_{-i} \in S_{-i}$. Consider the inequality in (\ref{eq: CPT_corr_ineq}) corresponding to such an $(i, s_i,d_i)$. Fix a one to one correspondence between the numbers $\{1,\dots,t\}$ and the joint strategies $\{s_{-i} \in S_{-i}\}$ (here $t = |S_{-i}|$). Let 
	\[
		x := (x_1, \dots, x_t) = (h_i(s_i,s_{-i}) )_{s_{-i} \in S_{-i}},
	\]
	 and 
	 \[
	 	y := (y_1,\dots,y_t) = (h_i(d_i,s_{-i}))_{s_{-i} \in S_{-i}}.
	 \]
	 Since $\hat \mu$ is completely mixed, $\hat \mu_i(s_i) > 0$. Let 
	  \[
	  	p = (p_1,\dots,p_t) = (\hat{\mu}_{-i}(s_{-i}))_{s_{-i} \in S_{-i}}
	  \]
	   be the conditional probability distribution on $S_{-i}$. 

	If either profile $(p,x)$ or $(p,y)$ pointwise dominated the other then the pointwise dominance would be strict since $x$ and $y$ are distinct and $p_j > 0$ for all $1 \leq j \leq t$. By the strict monotonicity property of CPT, we would get $V^{r_{i}(\hat \mu)}(p,x) \neq V^{r_i(\hat \mu)}(p,y)$ contrary to our assumption. Thus condition (ii) in Lemma~\ref{lem: CPT_corr_regret_dir} is satisfied and there exists a direction vector $\delta = (\delta_1,\dots,\delta_t)$ with $\sum_{j=1}^t \delta_j = 0$ such that $V_i^{r_i}(p + \epsilon \delta,x) < V_i^{r_i}(p + \epsilon \delta,y)$ for all $r_i \in \bbR$, for all $\epsilon>0$ such that $p + \epsilon \delta \in \Delta^{t-1}$. Note that the vector $\delta$ and the range of $\epsilon$ does not depend on the reference point $r_i$ (see remark~\ref{rem: lemma_nodepend}). Consider the joint probability distribution $\bar \mu$ given by
	\begin{align*}
		\bar \mu(c_i,s_{-i}) = \begin{cases}
			\hat \mu_{i}(s_i)(p_j+\epsilon \delta_j) \text{ if } c_i = s_i \text{ and $j$ corresponds to $s_{-i}$ },\\
			\hat \mu(c_i,s_{-i}) \text{ otherwise.}
		\end{cases}
	\end{align*}

Let $R_i^r(\cdot)$ denote the regret corresponding to player $i$, evaluated using her value function and probability weighting functions. This should be thought of as defined for any pair of outcome profiles $x$ and $y$  on $S_{-i}$ with a given probability distribution $p$ on $S_{-i}$, as in equation~(\ref{eq: regret}), with $V^r$ being replaced by $V_i^r$ and defined as in equation~(\ref{eq: CPT_value_discrete}), using the value function $v_i^r$ and the weighting functions $w_i^{\pm}$. Since $\hat \mu \in I \cap C_{CPT}$,
\[
	R_i^{r_i(\hat \mu)}(\hat \mu_{-i},x,y) = V_i^{r_i(\hat \mu)}(p,x) -  V_i^{r_i(\hat \mu)}(p,y) = 0,
\]
and
\[
	R_i^{r_i(\hat \mu)}(\hat \mu_{-i},y,x) = V_i^{r_i(\hat \mu)}(p,y) -  V_i^{r_i(\hat \mu)}(p,x) = 0.
\]
respectively.
Now if
\[
	R_i^{r_i(\bar \mu)}(\hat \mu_{-i},x,y) \leq R_i^{r_i(\hat \mu)}(\hat \mu_{-i},x,y)
\]
then from the choice of $\bar \mu$
\[
	R_i^{r_i(\bar \mu)}(\bar \mu_{-i}^{s_i},x,y)  = R_i^{r_i(\bar \mu)}(p + \epsilon \delta,x,y) < R_i^{r_i(\bar \mu)}(p,x,y)  = R_i^{r_i(\bar \mu)}(\hat \mu_{-i},x,y) \leq 0.
\]
On the other hand, if
\[
	R_i^{r_i(\bar \mu)}(\hat \mu_{-i},x,y) > R_i^{r_i(\hat \mu)}(\hat \mu_{-i},x,y) = 0
\]
then
\[
	R_i^{r_i(\bar \mu)}(\bar \mu_{-i}^{d_i},y,x) = R_i^{r_i(\bar \mu)}(\hat \mu_{-i},y,x) = - R_i^{r_i(\bar \mu)}(\hat \mu_{-i},x,y) < 0.
\]
Thus either of the inequalities in (\ref{eq: CPT_corr_ineq}) corresponding to deviation from $s_i$ to $d_i$ or $d_i$ to $s_i$ is violated by the joint distribution $\bar \mu$. Thus, for any neighborhood $N$ of $\hat \mu$, $\bar \mu$ belongs to $N$ for sufficiently small $\epsilon$ and $\bar \mu \notin C_{CPT}$. Thus $\hat \mu$ lies on the boundary of $C_{CPT}$.
\end{proof}


\section[2x2games]{$2\times 2$ games}\label{sec: 2x2games}
	 For a game $\Gamma$, the set $C_{CPT}$, in general, need not be convex (example 2 in \cite{keskin2016equilibrium}). In this section we will see that, in the special case of a $2\times2$ game with players having a fixed reference point independent of the underlying probability distribution, $C_{CPT}$ is a convex polytope.

	 Consider a $2$ player game $\Gamma$ with $N = \{1,2\}$ and $S_1 = S_2 = \{0,1\}$. With player $1$ as the row player and player $2$ as the column player and $\{a_{ij}, b_{ij}\}_{i,j \in \{0,1\} }$ representing payoffs for player $1$ and $2$ respectively, let the payoff matrix be as shown in figure~\ref{fig: 2x2game}. 
	 Here, the real numbers $a_{ij}$ and $b_{ij}$ should be thought of as 
	 outcomes in the terminology of cumulative prospect theory, but we will call them payoffs in this section. Let $\mu = \{\mu_{00},\mu_{01},\mu_{10},\mu_{11} \} \in \Delta^3$ be a joint probability distribution assigning probabilities to joint strategies as represented by the matrix in figure \ref{fig: 2x2game}. Let $r_1$ and $r_2$ be the fixed reference points (independent of the joint probability distribution $\mu$) for players $1$ and $2$ respectively.

	\begin{figure}[h]
    \parbox{.45\linewidth}{
	 \centering
	 \begin{tabular}{c | c | c |}
	 	\multicolumn{1}{c}{} & \multicolumn{1}{c}{0} & \multicolumn{1}{c}{1} \\
	 \cline{2-3}
	 0	& $a_{00},b_{00}$ & $a_{01},b_{01}$\\
	 \cline{2-3}
	 1	& $a_{10},b_{10}$ & $a_{11},b_{11}$\\
	 \cline{2-3}
	 \end{tabular}
     }
	\hfill
    \parbox{.45\linewidth}{
	 \centering
	 \begin{tabular}{c | c | c |}
	 	\multicolumn{1}{c}{} & \multicolumn{1}{c}{0} & \multicolumn{1}{c}{1} \\
	 \cline{2-3}
	 0	& $\mu_{00}$ & $\mu_{01}$\\
	 \cline{2-3}
	 1	& $\mu_{10}$ & $\mu_{11}$\\
	 \cline{2-3}
	 \end{tabular}
     }
     \caption{Payoff matrix (left) and joint probability matrix (right) of a $2\times 2$ game}\label{fig: 2x2game}
	 \end{figure}


\begin{proposition}\label{prop: 2x2convex}
	 	For the above $2 \times 2$ game, the set $C_{CPT}$ is a convex polytope.
	 \end{proposition}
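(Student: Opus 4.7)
The plan is to realize $C_{CPT}$ as the intersection of $\Delta^3$ with finitely many affine half-spaces. In a $2\times 2$ game the CPT correlated equilibrium definition contributes exactly four deviation inequalities, one for each triple $(i,s_i,d_i)$ with $i\in\{1,2\}$ and $s_i\neq d_i$. Since the opponent of player $i$ has only two strategies, the conditional distribution $\mu_{-i}^{s_i}$ lives on a two-point set and is parameterized by a single scalar $q\in[0,1]$; the reference point $r_i$ is fixed by hypothesis. Each of the four inequalities
\[
V_i^{r_i}(L(\mu,s_i,s_i))\geq V_i^{r_i}(L(\mu,s_i,d_i))
\]
is therefore a condition on the scalar $q$ alone. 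If this condition is satisfied precisely on some closed subinterval $[a,b]\subseteq[0,1]$, then because $q$ is a ratio of two joint-distribution entries with common denominator $\mu_i(s_i)$, the two endpoint conditions translate into linear inequalities in the entries of $\mu$ that are automatically valid when $\mu_i(s_i)=0$, consistent with the definition. Collecting the (at most eight) resulting linear inequalities together with the constraints of $\Delta^3$ then exhibits $C_{CPT}$ as a convex polytope.

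The core task is therefore the following univariate claim: for fixed outcome vectors $x=(x_1,x_2),\,y=(y_1,y_2)$ and fixed reference point $r$, the set $\{q\in[0,1]:F(q)\geq G(q)\}$ is a closed interval, where $F(q):=V^r(\{(1-q,x_1);(q,x_2)\})$ and $G(q):=V^r(\{(1-q,y_1);(q,y_2)\})$. If $x_j\geq y_j$ for both $j$ or $x_j\leq y_j$ for both $j$, strict monotonicity of $V^r$ in its outcomes forces $F-G$ to have constant sign on $(0,1)$, so the solution set is either all of $[0,1]$ or at most a boundary point, which is still an interval. In the remaining case $(x_1-y_1)(x_2-y_2)<0$, we may assume $x_1>y_1$ and $x_2<y_2$ without loss of generality; then strict monotonicity of $v^r$ gives $H(0)=v^r(x_1)-v^r(y_1)>0$ and $H(1)=v^r(x_2)-v^r(y_2)<0$ where $H:=F-G$, and it suffices to show that $H$ is strictly decreasing on $[0,1]$, which forces a unique zero and a solution set of the form $[0,q^\ast]$.

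To prove the strict monotonicity of $H$ I would expand $F$ and $G$ using the two-outcome form of~\eqref{eq: CPT_value_discrete}. Depending on whether $r$ lies above, below, or between $x_1$ and $x_2$ (and analogously between $y_1$ and $y_2$), each of $F$ and $G$ becomes a sum of two terms of the type $w^\pm(q)\,v^r(\cdot)$ or $w^\pm(1-q)\,v^r(\cdot)$. The analysis then splits on (a) the orderings of $x_1$ vs $x_2$ and of $y_1$ vs $y_2$, and (b) the position of $r$ among the four payoffs. When $x$ and $y$ are ranked in opposite orders, $F$ and $G$ are monotone in opposite directions in $q$, so $H$ is monotone term by term. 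When they are ranked the same way, I would verify that the coefficient of each surviving weighting-function term in $H$ has a definite sign, forced by $x_1>y_1$, $x_2<y_2$, and the sign conventions $v^r\geq 0$ on gains and $v^r\leq 0$ on losses; combined with strict monotonicity of $w^\pm$ this gives $H'<0$ throughout.

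The main obstacle is the bookkeeping in this case analysis, especially in subcases where $F$ and $G$ are both monotone in the same direction in $q$: one must verify that the combined coefficients of $w^+(q),\,w^+(1-q),\,w^-(q),\,w^-(1-q)$ in $H$ do not cancel but inherit the signs dictated by $x_1>y_1$ and $x_2<y_2$. Once the interval claim is checked in every subcase, assembling the four pairs of linear constraints with $\Delta^3$ yields $C_{CPT}$ as a convex polytope.
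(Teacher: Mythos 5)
Your proposal is correct and takes essentially the same approach as the paper: it reduces each of the four deviation conditions to a univariate threshold condition on the conditional probability $q$, establishes the interval structure via endpoint signs plus strict monotonicity of the regret $F-G$ in the mixed case, and converts the endpoints into linear inequalities in $\mu$ (e.g.\ of the form $\alpha_0\mu_{00}\leq\mu_{01}$), exactly as in the paper's cases (i)--(iv). The only difference is presentational: where the paper invokes its Lemma~\ref{lem: CPT_corr_regret_dir} for the strict monotonicity in the mixed-ranking case, you re-derive it directly by the case analysis on the position of $r$ among the payoffs, which is precisely the $t=2$ specialization of that lemma's proof.
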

	 \begin{proof}
	 The condition for $\mu \in C_{CPT}$  corresponding to the row player deviating from strategy $0$ to strategy $1$ in (\ref{eq: CPT_corr_ineq}) is:
	 \begin{align}\label{eq: corr_2_2_cond_1}
	  	\text{if $\mu_{00} + \mu_{01} > 0$ then } R_1^{r_1}(p^1,x,y) \geq 0,
	  \end{align} 
	  where $p^1 = (p^1_0,p^1_1)$, $p^1_0 = \frac{\mu_{00} }{\mu_{00} + \mu_{01}}, \quad p^1_1 =  \frac{\mu_{01}}{\mu_{00} + \mu_{01}}, x = (a_{00},a_{01}), y = (a_{10},a_{11})$. Let $C_1$ denote the set of all $\mu \in \Delta^3$ satisfying condition~(\ref{eq: corr_2_2_cond_1}). We have:
	  \begin{enumerate}[(i)]
	  	\item if $a_{00} \geq a_{10}$ and $a_{01} \geq a_{11}$, then $C_1 = \Delta^3$;
	  	\item if $a_{00} < a_{10}$ and $a_{01} = a_{11}$ (resp. $a_{00} = a_{10}$ and $a_{01} < a_{11}$), then $C_1 = \{\mu \in \Delta^3 | \mu_{00} = 0\}$ (resp. $C_1 = \{\mu \in \Delta^3 | \mu_{01} = 0\}$);
	  	\item If $a_{00} < a_{10}$ and $a_{01} < a_{11}$, then $C_1 = \{\mu \in \Delta^3 | \mu_{00} = 0, \mu_{01} = 0\}$;
	  	\item if $a_{00} < a_{10}$ and $a_{01} > a_{11}$ (resp. $a_{00} > a_{10}$ and $a_{01} < a_{11}$), then from lemma \ref{lem: CPT_corr_regret_dir}, $R_1^{r_1}(p^1,x,y)$ is strictly monotonic as a function of $p_0^1$ $(= 1 - p^1_1)$ on the interval $(0,1)$, 
	  	\begin{align*}
	  		&R_1^{r_1}((0,1),x,y) > 0 > R_1^{r_1}((1,0),x,y)\\
	  		&(\text{resp. $R_1^{r_1}((0,1),x,y) < 0 < R_1^{r_1}((1,0),x,y))$} ),
	  	\end{align*}
	  	and hence the inequality in condition~(\ref{eq: corr_2_2_cond_1}) holds iff $p^1_0 \leq q_0$ (resp. $p^1_1 \leq q_1$) for a certain $q_0 \in (0,1)$ (resp. $q_1 \in (0,1)$) depending on the payoffs $a_{00},a_{01},a_{10}$ and $a_{11}$, the value function $v_1^r(\cdot)$, and the probability weight functions $w_1^\pm(\cdot)$. Thus $C_1 = \{\mu \in \Delta^3 | \alpha_0 \mu_{00} \leq \mu_{01}\}$ with $\alpha_0 = \frac{1-q_0}{q_0}$ (resp. $C_1 = \{\mu \in \Delta^3 | \alpha_1 \mu_{00} \geq \mu_{01}\}$ with $\alpha_1 = \frac{q_1}{1-q_1}$ ).
	  \end{enumerate}
	
	In each case, $C_1$ is a convex polytope. Similarly, the other three conditions in (\ref{eq: CPT_corr_ineq}), corresponding to the row player deviating from strategy $1$ to strategy $0$, the column player deviating from strategy $0$ to strategy $1$, and the column player deviating from strategy $1$ to strategy $0$, give rise to convex polytopes $C_2,C_3$ and $C_4$ respectively. The set $C_{CPT}$ is the (non-empty) intersection of these convex polytopes and hence is itself a convex polytope.
	\end{proof}


\begin{definition}
	For an $n$ player game $\Gamma = (N,(S_i)_{i\in N}, (h_i)_{i \in N})$, let $s_i,d_i \in S_i$ be two strategies corresponding to player $i$.
	\begin{itemize}
	 	\item Strategies $s_i$ and $d_i$ are said to be equivalent if player $i$ is indifferent in choosing between $s_i$ and $d_i$ no matter what the other players do.
	 	\item Strategy $s_i$ is said to be weakly dominated by strategy $d_i$ if there exists at least one strategy profile of the opponents for which choosing $d_i$ is better than choosing $s_i$, and for all strategy profiles of the opponents choosing $d_i$ is at least as good as choosing $s_i$.
	 	\item Strategy $s_i$ is said to be strictly dominated by strategy $d_i$ if, for every strategy profile of the opponents, choosing $d_i$ is better than choosing $s_i$.
	 \end{itemize} 
\end{definition}

Note that a strictly dominated strategy is also a weakly dominated strategy. 

As observed in section~\ref{sec: main}, two outcome profiles $x$ and $y$ 
are equivalent under all probability distributions $p$ iff $x = y$. Thus, as under EUT, for players with CPT preferences we have the following:
\begin{itemize}
	\item Strategy $s_i$ is equivalent to strategy $d_i$ iff
	\[
		h_i(s_i,s_{-i}) = h_i(d_i,s_{-i}) \quad \forall s_{-i} \in S_{-i}.
	\]
	\item Strategy $s_i$ is weakly dominated by strategy $d_i$ iff
	\[
		h_i(s_i,s_{-i}) \leq h_i(d_i,s_{-i}) \quad \forall s_{-i} \in S_{-i},
	\]
	where strict inequality holds for at least one $s_{-i} \in S_{-i}$.
	\item Strategy $s_i$ is strictly dominated by strategy $d_i$ iff
	\[
		h_i(s_i,s_{-i}) < h_i(d_i,s_{-i}) \quad \forall s_{-i} \in S_{-i}.
	\]
\end{itemize}
We now look at the convex polytope $C_{CPT}$ for a $2 \times 2$ game in more detail. We first discuss $2 \times 2$ games with no equivalent or weakly dominated strategies. Let $G^0$ denote the set of all such games.
For any game $\Gamma \in G^0$, the relation amongst the payoffs for all the four conditions corresponding to $C_1,C_2,C_3$ and $C_4$ are as in case (iv) above. Further, the conditions corresponding to the row player deviating from strategy $0$ to strategy $1$, and vice versa are
	\begin{equation}\label{eq: cond_1}
		\text{if $\mu_{00} + \mu_{01} > 0$ then } V_1^{r_1}(p^1,x) \geq V_1^{r_1}(p^1,y);
    \end{equation}
and
	\begin{equation}\label{eq: cond_2}
     \text{if $\mu_{10} + \mu_{11} > 0$ then } V_1^{r_1}(p^2,x) \leq V_1^{r_1}(p^2,y);
	\end{equation}
	respectively, where $p^1$ is as in proposition~\ref{prop: 2x2convex} and $p^2 = (p^2_0,p^2_1)$, $p^2_0 = \frac{\mu_{10}}{\mu_{10} + \mu_{11}}$ and $p^2_1 = \frac{\mu_{11}}{\mu_{10} + \mu_{11}}$. Now there exists a $q_0 \in (0,1)$ (or a $q_1 \in (0,1)$) such that inequality~(\ref{eq: cond_1}) holds for all $p^1_0 \leq q_0$ (resp. $p^1_1 \leq q_1$) and inequality~(\ref{eq: cond_2}) holds for all $p^2_0 \geq q_0$ (resp. $p^2_1 \geq q_1$). Thus if $C_1 = \{\mu \in \Delta^3 | \alpha_0 \mu_{00} \leq \mu_{01}\}$ (resp. $C_1 = \{\mu \in \Delta^3 | \alpha_1 \mu_{00} \geq \mu_{01}\}$), then $C_2 = \{\mu \in \Delta^3 | \alpha_0 \mu_{10} \geq \mu_{11}\}$ (resp. $C_2 = \{\mu \in \Delta^3 | \alpha_1 \mu_{10} \leq \mu_{11}\}$). Similarly for player $2$. Thus, depending on the relation amongst the payoffs, the conditions~(\ref{eq: CPT_corr_ineq}) take one of the following forms:
	\begin{enumerate}[(I)]
		\item if $a_{00} > a_{10}, a_{01} < a_{11}, b_{00} > b_{01}, b_{10} < b_{11}$ then
		\[
			\alpha \mu_{00} \geq \mu_{01},\;\alpha \mu_{10} \leq \mu_{11},\; \beta \mu_{00} \geq  \mu_{10},\; \beta \mu_{01} \leq \mu_{11};
		\]
		\item if $a_{00} < a_{10}, a_{01} > a_{11}, b_{00} > b_{01}, b_{10} < b_{11}$ then
		$$\alpha \mu_{00} \leq \mu_{01},\;\alpha \mu_{10} \geq \mu_{11},\; \beta \mu_{00} \geq  \mu_{10},\; \beta \mu_{01} \leq \mu_{11};$$
		\item if $a_{00} > a_{10}, a_{01} < a_{11}, b_{00} < b_{01}, b_{10} > b_{11}$ then
		$$\alpha \mu_{00} \geq \mu_{01},\;\alpha \mu_{10} \leq \mu_{11},\; \beta \mu_{00} \leq  \mu_{10},\; \beta \mu_{01} \geq \mu_{11};$$
		\item if $a_{00} < a_{10}, a_{01} > a_{11}, b_{00} < b_{01}, b_{10} > b_{11}$ then
		$$\alpha \mu_{00} \leq \mu_{01},\;\alpha \mu_{10} \geq \mu_{11},\; \beta \mu_{00} \leq  \mu_{10},\; \beta \mu_{01} \geq \mu_{11};$$
	\end{enumerate}
for some $\alpha, \beta > 0$. Thus every $2\times 2$ game with no equivalent or weakly dominated strategies can be classified into one of the above four types depending on the relations amongst its payoffs.

We consider the canonical $2\times 2$ games $\gamma_l(\alpha,\beta)$ for $l \in \{I,II,III,IV\}$ with $\alpha,\beta > 0$ as shown in figure~\ref{fig: canon_games}.
\begin{figure}
\parbox{.45\linewidth}{
\centering
\begin{tabular}{c | c | c |}
	 	\multicolumn{1}{c}{} & \multicolumn{1}{c}{0} & \multicolumn{1}{c}{1} \\
	 \cline{2-3}
	 0	& $\alpha,\beta$ & $0,0$\\
	 \cline{2-3}
	 1	& $0,0$ & $1,1$\\
	 \cline{2-3}
	 \end{tabular}

\caption*{$\gamma_{I}(\alpha,\beta)$}
}
\hfill
\parbox{.45\linewidth}{
\centering
\begin{tabular}{c | c | c |}
	 	\multicolumn{1}{c}{} & \multicolumn{1}{c}{0} & \multicolumn{1}{c}{1} \\
	 \cline{2-3}
	 0	& $-\alpha,\beta$ & $0,0$\\
	 \cline{2-3}
	 1	& $0,0$ & $-1,1$\\
	 \cline{2-3}
	 \end{tabular} 

\caption*{$\gamma_{II}(\alpha,\beta)$}
}
\vskip\baselineskip
\parbox{.45\linewidth}{
\centering
\begin{tabular}{c | c | c |}
	 	\multicolumn{1}{c}{} & \multicolumn{1}{c}{0} & \multicolumn{1}{c}{1} \\
	 \cline{2-3}
	 0	& $\alpha,-\beta$ & $0,0$\\
	 \cline{2-3}
	 1	& $0,0$ & $1,-1$\\
	 \cline{2-3}
	 \end{tabular} 

\caption*{$\gamma_{III}(\alpha,\beta)$}
}
\hfill
\parbox{.45\linewidth}{
\centering
\begin{tabular}{c | c | c |}
	 	\multicolumn{1}{c}{} & \multicolumn{1}{c}{0} & \multicolumn{1}{c}{1} \\
	 \cline{2-3}
	 0	& $-\alpha,-\beta$ & $0,0$\\
	 \cline{2-3}
	 1	& $0,0$ & $-1,-1$\\
	 \cline{2-3}
	 \end{tabular} 

\caption*{$\gamma_{IV}(\alpha,\beta)$}
}
\caption{Canonical $2\times 2$ games}\label{fig: canon_games}
\end{figure}
One can check that the set $C_{EUT}$ for each of these games is given by the corresponding inequalities above. 

As in the paper by \citet{calvo2006set}, based on the type of inequalities satisfied, we classify all $2 \times 2$ games, with no equivalent or weakly dominated strategies, into three types:
	\begin{itemize}
		\item coordination games if the inequalities take form (I),
		\item anti-coordination games if the inequalities take form (IV) and,
		\item competitive games if the inequalities take either form (II) or form (III).
	\end{itemize}
Since the inequalities above completely characterize the set $C_{CPT}$, it is enough to find the set $C_{EUT}$ for each of the canonical games. For case (II), we have
\[
	\alpha \mu_{00} \leq \mu_{01} \leq \frac{\mu_{11}}{\beta} \text{ and
 } \mu_{11} \leq \alpha \mu_{10} \leq \beta \alpha \mu_{00}. 
 \]
 Thus all inequalities must be satisfied with equality and we get
 \begin{align}
 	\mu_{00} &= \frac{1}{(1+\alpha)(1+\beta)}, \mu_{01} = \frac{\alpha}{(1+\alpha)(1+\beta)},\nonumber\\
 	 \mu_{10} &= \frac{\beta}{(1+\alpha)(1+\beta)}, \mu_{11} = \frac{\alpha \beta}{(1+\alpha)(1+\beta)}.
 \end{align}
Case (III) is similar. Thus, for competitive games, the set $C_{EUT}$ is reduced to a single point, which is also the unique mixed Nash equilibrium. 
For coordination games, the set $C_{EUT}$ is a convex polytope with five vertices \cite{calvo2006set} (figure \ref{fig: vertices}). 
\begin{figure}
	\centering
	\begin{tabular}{c c c c c}
		\hline
		$\mu$ & $\mu_{00}$ & $\mu_{01}$ & $\mu_{10}$ & $\mu_{11}$ \\
		\hline
		$\mu^*_A(\alpha,\beta)$ & $1$& $0$ & $0$ & $0$\\
		$\mu^*_B(\alpha,\beta)$ & $0$& $0$ & $0$ & $1$\\
		$\mu^*_C(\alpha,\beta)$ & $\frac{1}{(1+\alpha)(1 + \beta)}$& $\frac{\alpha}{(1+\alpha)(1 + \beta)}$ & $\frac{\beta}{(1+\alpha)(1 + \beta)}$ & $\frac{\alpha \beta}{(1+\alpha)(1 + \beta)}$\\
		$\mu^*_D(\alpha,\beta)$ & $\frac{1}{1 +\beta + \alpha \beta}$& $0$ & $\frac{\beta}{1 +\beta + \alpha \beta}$ & $\frac{\alpha \beta}{1 +\beta + \alpha \beta}$\\
		$\mu^*_E(\alpha,\beta)$ & $\frac{1}{1 +\alpha + \alpha \beta}$& $\frac{\alpha}{1 +\alpha + \alpha \beta}$ & $0$ & $\frac{\alpha \beta}{1 +\alpha + \alpha \beta}$\\
		\hline
	\end{tabular}
	\caption{Vertices of the convex polytope $C_{EUT}$ for $\gamma_{I}(\alpha,\beta)$ }\label{fig: vertices}
\end{figure}
It intersects the set $I$ at the three vertices $\mu^*_A(\alpha,\beta), \mu^*_B(\alpha,\beta)$ and $\mu^*_C(\alpha,\beta)$ of which the first two are pure Nash equilibria. From the set of inequalities corresponding to cases (I) and (IV) we can see that the joint distribution $\mu = (\mu_{00},\mu_{01},\mu_{10},\mu_{11})$ belongs to $C_{EUT}$ of $\gamma_I(\alpha,\beta)$ iff $\tau(\mu) := (\mu_{10},\mu_{11},\mu_{00},\mu_{01})$ belongs to $C_{EUT}$ of $\gamma_{IV}(\alpha,1/\beta)$ \cite{calvo2006set}. 
Thus, for anti-coordination games, the set $C_{EUT}$ is again a convex polytope with five vertices and it intersects $I$ at three of its vertices with two of them pure Nash equilibria. The vertices can be found from  figure~\ref{fig: vertices}, using the transformation $\tau$, and replacing $\beta$ by $1/\beta$. Since $C_{CPT}$ is determined by the same set of inequalities in $\alpha$ and $\beta$, all the results carry over to $2 \times 2$ games with CPT preferences. In particular, the set $C_{CPT}$ is determined by $\alpha$ and $\beta$. Since the set of CPT Nash equilibria (pure and mixed) is given by the intersection of $I$ and $C_{CPT}$, we have a unique mixed CPT Nash equilibrium and no pure CPT Nash equilibria for competitive games, and one mixed and two pure CPT Nash equilibria for coordination and anti-coordination games.

Consider now a $2 \times 2$ game with at least one strictly dominated strategy. This corresponds to case (i) above with both inequalities strict (i.e. $a_{00} > a_{10}$ and $a_{01} > a_{11}$) or case (iii). Strictly dominated strategies cannot be used with positive probability in any correlated equilibrium of that game. It is easy then to compute the set $C_{CPT}$ for such a game by eliminating the strictly dominated strategies. Suppose strategy $1$ is strictly dominated by strategy $0$. Thus, $a_{00} > a_{10}$ and $a_{01} > a_{11}$. If $b_{00} > b_{01}$ then $C_{CPT} = \{\mu \in \Delta^3 | \mu_{01} = \mu_{10} = \mu_{11} = 0\}$ is a point. If $b_{00} < b_{01}$ then $C_{CPT} = \{\mu \in \Delta^3 | \mu_{00} = \mu_{10} = \mu_{11} = 0\}$ is a point. If $b_{00} = b_{01}$ then $C_{CPT} = \{\mu \in \Delta^3 | \mu_{10} = \mu_{11} = 0\}$ is a line segment. In each case $C_{CPT}$ is containedin $I$. The case when strategy $0$ is strictly dominated by strategy $1$ is similar.

Let $G^1$ denote the set of all $2 \times 2$ games with at least one weakly dominated strategy but no equivalent or strictly dominated strategy. If player 1 has a weakly dominated strategy then this corresponds to case (i) above with one equality and one strict inequality (i.e. $a_{00} > a_{10}$ and $a_{01} = a_{11}$, or $a_{00} = a_{10}$ and $a_{01} > a_{11}$), or case (ii). The set of all $2 \times 2$ games, each game characterized by its payoff matrix, forms an $8$-dimensional Euclidean vector space. Every game $\Gamma \in G^1$ can be seen as a limit of games in $G^0$ of a unique type. For example, a game $\Gamma \in G^1$ with payoffs satisfying $a_{00} < a_{10}, a_{01} = a_{11}, b_{00} > b_{01}, b_{10} < b_{11}$ is the limit, as $\epsilon \downarrow 0$, of games $\Gamma_{\epsilon} \in G^0$ with payoffs same as that of $\Gamma$ except $a_{01}$ replaced by $a_{01} + \epsilon$.
 Each of the games $\Gamma_{\epsilon}$ is of type (II) for sufficiently small $\epsilon > 0$. Further, if $\gamma_{II}(\alpha_\epsilon,\beta_{\epsilon})$ are the canonical games corresponding to $\Gamma_{\epsilon}$ then $\alpha_{\epsilon} \to \infty$ and $\beta_{\epsilon} = \beta$ for some fixed $\beta$. Using this observation we classify every game $\Gamma \in G^1$ into four types $l \in \{I,II,III,IV\}$ and each of these $4$ types into eight further subtypes depending on the limit of $\alpha_{\epsilon}$ and $\beta_{\epsilon}$ going either to $0, \infty$ or some real number in $(0,\infty)$ with at least one of them tending to $0$ or $\infty$. The set $C_{CPT}$ for games of the type (I) and (II) are given in figures~\ref{fig: dom_case1} and \ref{fig: dom_case2} respectively. The set $C_{CPT}$ for types (IV) and (III) can be found using the transformation $\tau$ and replacing $\beta$ by $1/\beta$ (with the convention $1/0 = \infty$ and $1/\infty = 0$) from figures \ref{fig: dom_case1} and \ref{fig: dom_case2} respectively. 

 \begin{figure}[t]
 \centering
  	\begin{tabular}{c | c | c |c |}
  	    \multicolumn{1}{c}{}& \multicolumn{1}{c}{$\beta = 0$} & \multicolumn{1}{c}{$0 < \beta < \infty$} & \multicolumn{1}{c}{$\beta = \infty$}\\
  	    \cline{2-4}
  	 $\alpha = 0$ & $\mu_{01} =  0$ & $\mu_{01} = 0$ & $\mu_{01} = 0$ \\
  	 	& $\mu_{10} = 0$ & $\beta \mu_{00} \geq \mu_{10}$ &  \\
  	\cline{2-4}
  	 $0 < \alpha < \infty$ & $\mu_{10} = 0$ & $-$ & $\mu_{01} = 0$ \\
  	 	& $\alpha \mu_{00} \geq \mu_{01}$ &  & $\alpha \mu_{10} \leq \mu_{11}$\\
  	 \cline{2-4}
  	 $\alpha = \infty$ & $\mu_{10}=0$ & $\mu_{10} = 0$ & $\mu_{01} = 0$ \\
  	 	& & $\beta \mu_{01} \leq \mu_{11}$ & $\mu_{10} = 0$\\
  	 \cline{2-4}
  	\end{tabular}
  	\caption{The set $C_{CPT}$ for games of type (I) with weakly dominated strategies}\label{fig: dom_case1}
  \end{figure} 

\begin{figure}[t]
\centering
  	\begin{tabular}{c | c | c |c |}
  	    \multicolumn{1}{c}{}& \multicolumn{1}{c}{$\beta = 0$} & \multicolumn{1}{c}{$0 < \beta < \infty$} & \multicolumn{1}{c}{$\beta = \infty$}\\
  	    \cline{2-4}
  	 $\alpha = 0$ & $\mu_{11} = 0$ & $\mu_{11} =   \mu_{01} = 0$ & $\mu_{11} = 0$ \\
  	 	& $\mu_{10} = 0$ & $\beta \mu_{00} \geq \mu_{10}$ & $\mu_{01} = 0$\\
  	 \cline{2-4}
  	 $0 < \alpha < \infty$ & $\mu_{11} = \mu_{10} = 0$ & $-$ & $\mu_{01} = \mu_{00} = 0$ \\
  	 	& $\alpha \mu_{00} \leq \mu_{01}$ &  & $\alpha \mu_{10} \geq \mu_{11}$\\ 
  	 \cline{2-4}
  	 $\alpha = \infty$ & $\mu_{10}  =0$ & $\mu_{00} =\mu_{10} = 0$ & $\mu_{00} =  0$ \\
  	 	& $\mu_{00} = 0$ & $\beta \mu_{01} \leq \mu_{11}$ & $\mu_{01} = 0$\\
  	 \cline{2-4}
  	\end{tabular}
  	\caption{The set $C_{CPT}$ for games of type (II) with weakly dominated strategies}\label{fig: dom_case2}
  \end{figure} 

  The geometry of $C_{CPT}$ in case (I) is as follows:
 \begin{itemize}
 	\item If $\alpha = \beta = 0$ or $\alpha = \beta = \infty$, then $C_{CPT}$ is a line with endpoints $F = (1,0,0,0)$ and $G = (0,0,0,1)$. It intersects the set $I$ at the two endpoints $F$ and $G$.
 	\item If $\alpha=0,\beta = \infty$, then $C_{CPT}$ is a triangle with vertices $F = (1,0,0,0)$, $G = (0,0,1,0)$, and $H = (0,0,0,1)$. It intersects $I$ at the lines with endpoints  $\{F,G\}$ and $\{G,H\}$. Similarly, if $\alpha=\infty,\beta = 0$, then $C_{CPT}$ is a triangle and it intersects $I$ at two lines.
 	\item If $\alpha = 0, 0 < \beta < \infty$, then $C_{CPT}$ is a triangle with vertices $F = (0,0,0,1)$, $G = (1,0,0,0)$, and $H = (\frac{1}{1+\beta},0,\frac{\beta}{1+\beta},0)$. It intersects the set $I$ at the point $(0,0,0,1)$ and the line joining the points $(1,0,0,0)$ and $(\frac{1}{1+\beta},0,\frac{\beta}{1+\beta},0)$. The remaining three cases can be analyzed similarly.
 \end{itemize}
The geometry of $C_{CPT}$ in case (II) is as follows:
\begin{itemize}
	\item If $\alpha = \beta = 0$, then the set $C_{CPT}$ is a line joining the points $(1,0,0,0)$ and $(0,1,0,0)$ contained in the set $I$. Similarly, for the cases when $\alpha = 0$ or $\infty$ and $\beta = 0$ or $\infty$, the set $C_{CPT}$ is a line segment contained in the set $I$.
	\item If $\alpha = 0, 0 < \beta < \infty$, then $C_{CPT}$ is a line joining the points $(1,0,0,0)$ and $(\frac{1}{1+\beta},0,\frac{\beta}{1+\beta},0)$ and is contained in the set $I$. The remaining three cases can be analyzed similarly.
\end{itemize}
The geometry of $C_{CPT}$ is cases (IV) and (III) can be obtained from cases (I) and (II) respectively, using the transformation $\tau$, and replacing $\beta$ by $1/\beta$.

We now consider a $2 \times 2$ game with at least one equivalent pair of strategies. Suppose player $1$ has equivalent strategies. This corresponds to case (i) above with both equalities (i.e. $a_{00} = a_{10}$ and $a_{01} = a_{10}$). Thus player $1$ is indifferent between his strategies. For player $2$, if the two strategies are equivalent, then the game is trivial and $C_{CPT} = \Delta^3$. If one of the strategies for player $2$ is weakly dominated, say strategy $0$ is weakly dominated by strategy $1$, then either $b_{00} = b_{01}, b_{10} < b_{11}$ or $b_{00} < b_{01}, b_{10} = b_{11}$. If $b_{00} = b_{01}, b_{10} < b_{11}$, then the set $C_{CPT} = \{\mu \in \Delta^3| \mu_{10} = 0\}$ is a triangle with vertices $F = (1,0,0,0)$, $G = (0,1,0,0)$ and $H = (0,0,0,1)$. It intersects the set $I$ at the lines with endpoints $\{F,G\}$ and $\{G,H\}$. The other three cases are similar. If neither of the two strategies for player $2$ dominates the other, then $C_{CPT}$ is characterized by the inequalities 
\begin{equation*}\label{eq: beta_ineq}
	\{\beta \mu_{00} \geq  \mu_{10},\; \beta \mu_{01} \leq \mu_{11}\} \text{ or } \{\beta \mu_{00} \leq  \mu_{10},\; \beta \mu_{01} \geq \mu_{11}\}.
\end{equation*}
where the former pair holds if $b_{00} > b_{01}, b_{10} < b_{11}$ and the latter holds if $b_{00} < b_{01}, b_{10} > b_{11}$.
Suppose the first pair of inequalities hold (the other case can be handled similarly). Then one can check that the set $C_{CPT}$ is a tetrahedron with vertices $E = (1,0,0,0), F = (\frac{1}{1+\beta},0,\frac{\beta}{1+\beta},0), G = (0,0,0,1)$ and $H = (0,\frac{1}{1+\beta},0,\frac{\beta}{1+\beta})$. It intersects $I$ at the lines with endpoints $\{E,F\}$, $\{G,H\}$ and $\{F,H\}$.


\section[connected]{Connectedness of $C_{CPT}$}\label{sec: conn}

In the previous section, we saw that for $2 \times 2$ games the set $C_{CPT}$ is a convex polytope. However, in general, the set $C_{CPT}$ can have a more complicated geometry. We will now see that the set $C_{CPT}$ can, in fact, be  disconnected.

In this section, we restrict our attention to games with each player $i$ having reference point $r_i = 0$, and all the outcomes $h_i(\cdot)$ non-negative. Thus all our outcome profiles are ``one-sided'' with zero reference point, and we will denote $w_i^+(\cdot), v_i^0(\cdot), V_i^r(\cdot)$ simply by $w_i(\cdot),v_i(\cdot),V_i(\cdot)$ respectively.

The geometry of the set $C_{CPT}$ is determined by the set of inequalities~(\ref{eq: CPT_corr_ineq}). Let us consider the inequality corresponding to player $i$ deviating from strategy $s_i$ to $d_i$. For ease of notation, fix a one to one correspondence between the numbers $\{1,\dots,t\}$ and the joint strategies $\{s_{-i} \in S_{-i}\}$ (here $t = |S_{-i}|$). Let 
\[
	x = (x_1, \dots, x_t) := \l(v_i(h_i(s_i,s_{-i}))\r)_{s_{-i} \in S_{-i}},
\]
 and 
 \[
 	y = (y_1,\dots,y_t) := \l(v_i(h_i(d_i,s_{-i}))\r)_{s_{-i} \in S_{-i}}.
 \]
  Let $p = (p_1,\dots,p_t) \in \Delta^{t-1}$ be a joint probability distribution on $S_{-i}$.
  Let $(a_1,\dots,a_t)$ and $(b_1,\dots,b_t)$ be permutations of $(1,\dots,t)$ such that
\[
	x_{a_1} \geq x_{a_2} \geq \dots \geq x_{a_t} \text{ and } y_{b_1} \geq y_{b_2} \geq \dots \geq y_{b_t},
\]
respectively.

Consider the inequality
\begin{equation}\label{eq: simple_notation_ineq}
	\tilde{V}_i(p,x) \geq \tilde{V}_i(p,y),
\end{equation}
where
\begin{align}\label{eq: conn_cpt_x}
	\tilde{V}_i(p,x) &= x_{a_t} + w_i(p_{a_1} + \dots + p_{a_{t-1}})[x_{a_{t-1}} - x_{a_t}] \nonumber\\
	&+ w_i(p_{a_1} + \dots + p_{a_{t-2}})[x_{a_{t-2}} - x_{a_{t-1}}]
	+ \dots + w_i(p_{a_1})[x_{a_1} - x_{a_2}], 
\end{align}
and
\begin{align}\label{eq: conn_cpt_y}
	\tilde{V}_i(p,y) &= y_{b_t} + w_i(p_{b_1} + \dots + p_{b_{t-1}})[y_{b_{t-1}} - y_{b_t}]\nonumber\\
 	& + w_i(p_{b_1} + \dots + p_{b_{t-2}})[y_{b_{t-2}} - y_{b_{t-1}}] + \dots + w_i(p_{b_1})[y_{b_1} - y_{b_2}].
\end{align}
{To contrast with the notation used in earlier sections, note
that $\tilde{V}_i(p,x) = V_i(p,\tilde{x})$ and $\tilde{V}_i(p,y) = V_i(p,\tilde{y})$, where
$\tilde{x} := \l(h_i(s_i,s_{-i})\r)_{s_{-i} \in S_{-i}}$ and
$\tilde{y} := \l(h_i(d_i,s_{-i})\r)_{s_{-i} \in S_{-i}}$.}
Let $C(i,s_i,d_i)$ denote the set of all probability vectors $p \in \Delta^{t-1}$ that satisfy 
the inequality~\eqref{eq: simple_notation_ineq}.
We can similarly define $C(i,s_i,d_i)$ for all $i \in N, s_i,d_i \in S_i$. It is clear from the definition of CPT correlated equilibrium that for a joint probability distribution $\mu \in C_{CPT}$, provided $\mu_i(s_i) > 0$, the probability vector $p  = \mu_{-i}^{s_i} \in \Delta^{t-1}$ should belong to $C(i,s_i,d_i)$ for all $d_i \in S_i$. Let
\[
	C(i,s_i) := \cap_{d_i \in S_i} C(i,s_i,d_i).
\]
Now, for all $i$, define a subset $C(i) \subset \Delta^{|S|-1}$, as follows:
\[
	C(i) := \{\mu \in \Delta^{|S|-1} | \mu_{-i}^{s_i} \in C(i,s_i), \forall s_i \in S_i \text{ such that } \mu_i(s_i) > 0 \}.
\]
Note that since $C_{CPT}$ is nonempty, the set $C(i)$ is nonempty for each $i$. The set $C(i)$ can be constructed from the sets $\{C(i,s_i),s_i \in S_i\}$ as follows: let $p^{s_i} \in C(i,s_i)$ for all $s_i \in S_i$ such that $C(i,s_i) \neq \phi$, let $q_i \in \Delta^{|S_i|-1}$ be a probability distribution over $S_i$ such that $q_i(s_i) =  0$ for all $s_i  \in S_i$ such that 
$C(i,s_i) = \emptyset$, 
and define a joint probability distribution $\mu \in \Delta^{|S|-1}$ by $\mu(s_i,s_{-i}) = q_i(s_i) p^{s_i}(s_{-i})$ if $C(i,s_i) \neq \phi$ and $\mu(s_i,s_{-i}) = 0$ otherwise. Then $\mu \in C(i)$, and for every $\mu \in C(i)$, the corresponding $q_i = \mu_i$ for all $s_i \in S_i$  and $p^{s_i} = \mu_{-i}^{s_i}$ for all $s_i \in S_i$ with $C(i,s_i) \neq \phi$. Further, it is clear that 
\[
	C_{CPT} = \cap_{i \in N} C(i).
\]
Thus the set $C_{CPT}$ is uniquely determined by the collection of sets 
$$\{C(i,s_i,d_i),i\in N,s_i,d_i \in S_i\}.$$

\begin{lemma}
	In the above setting, the set $C(i,s_i,d_i)$ is connected.
\end{lemma}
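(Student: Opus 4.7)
I would prove the statement by induction on $t$, strengthening it to allow arbitrary outcome profiles $x, y \in \bbR^t$ (not merely ones arising from game payoffs); the lemma as stated then follows as a special case. Write $f(p) := \tilde V_i(p, x) - \tilde V_i(p, y)$, so $C(i, s_i, d_i) = \{p \in \Delta^{t-1} : f(p) \geq 0\}$. The base case $t = 1$ is immediate. For the inductive step, I first dispatch the easy cases using strict stochastic dominance: if $x$ pointwise dominates $y$ (or $x = y$), then $f \geq 0$ on $\Delta^{t-1}$, so $C(i, s_i, d_i) = \Delta^{t-1}$, which is connected; if $y$ strictly pointwise dominates $x$, then $f(p) < 0$ whenever $p_j > 0$ for some $j$ with $y_j > x_j$, so $C(i, s_i, d_i) = \{p \in \Delta^{t-1} : p_j = 0 \text{ whenever } y_j > x_j\}$ is a face of the simplex, again connected.

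In the remaining case, neither profile dominates the other, so condition (ii) of Lemma~\ref{lem: CPT_corr_regret_dir} produces a direction $\delta = \delta(j_1, j_2)$ along which $f$ strictly decreases on the interior of $\Delta^{t-1}$. I would then exhibit a strong deformation retract of $C := C(i, s_i, d_i)$ onto $C \cap F$, where $F := \{p \in \Delta^{t-1} : p_{j_2} = 0\} \cong \Delta^{t-2}$, via the homotopy
\[
    H(p, \tau) := p - \tau \, p_{j_2} \, \delta, \qquad \tau \in [0, 1].
\]
This $H$ is continuous, satisfies $H(\cdot, 0) = \mathrm{id}$, sends $p$ to a point of $F$ at $\tau = 1$, fixes $F$ pointwise, and keeps $H(p, \tau) \in \Delta^{t-1}$ throughout. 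Since $\tau \mapsto H(p, \tau)$ is a straight segment in direction $-\delta$, Lemma~\ref{lem: CPT_corr_regret_dir} together with continuity of $f$ yields $f(H(p, \tau)) \geq f(p) \geq 0$ for every $p \in C$, so $H$ maps $C \times [0, 1]$ into $C$. The set $C \cap F$ is exactly the analogous set on $\Delta^{t-2}$ for the restricted outcome profiles $(x_j)_{j \neq j_2}$ and $(y_j)_{j \neq j_2}$, hence connected by the inductive hypothesis; since deformation retracts preserve connectedness, so is $C$.

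The main obstacle is that Lemma~\ref{lem: CPT_corr_regret_dir} provides strict monotonicity of $f$ along $-\delta$ only on the interior of $\Delta^{t-1}$, whereas $H$ must be verified to take values in $C$ starting from \emph{every} $p \in C$, including those lying on lower-dimensional faces where the path along $-\delta$ stays inside a face the whole time. I would handle this by approximating a boundary point $p$ by interior points $p_n$, applying the interior monotonicity to each $p_n$ to obtain $f(H(p_n, \tau)) \geq f(p_n)$, and then passing to the limit using continuity of $f$ and $H$ to obtain the (non-strict) inequality $f(H(p, \tau)) \geq f(p)$ that is sufficient for the deformation retract to remain inside $C$.
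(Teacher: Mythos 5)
Your proof is correct, and it takes a genuinely different route from the paper's on one branch. The paper splits on whether the profiles $x$ and $y$ are \emph{similarly ranked}: when they are, it substitutes $l_j = w_i(p_{a_1}+\dots+p_{a_j})$, under which the inequality $\tilde{V}_i(p,x) \geq \tilde{V}_i(p,y)$ becomes linear in $(l_1,\dots,l_t)$, so $C(i,s_i,d_i)$ is the preimage under a homeomorphism of a convex polytope; only in the non-similarly-ranked case does it drain a coordinate, choosing $j_1,j_2$ with $x_{j_1}>x_{j_2}$ and $y_{j_1}\leq y_{j_2}$ and moving mass from $j_2$ to $j_1$ (which preserves the inequality by strict stochastic dominance alone), then recursing on the smaller support. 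Your path $\tau \mapsto H(p,\tau)$ is exactly the paper's $q(\epsilon)$ reparametrized, so on that branch the mechanics coincide; what is genuinely new is that you also retract in the similarly-ranked, non-dominating case, importing the full strength of Lemma~\ref{lem: CPT_corr_regret_dir}(ii) --- with Remark~\ref{rem: lemma_nodepend} guaranteeing a single $\delta=\delta(j_1,j_2)$ valid for all interior $p$ --- instead of linearizing, and you organize everything as an induction on $t$ whose base consists of the dominance cases ($C = \Delta^{t-1}$, or a face, possibly empty). Your uniform route avoids the change of variables, makes explicit the strengthening to arbitrary $x,y \in \bbR^t$ (which the paper's recursion also needs, silently), and delivers more than connectedness: a tower of deformation retractions terminating in a convex set yields path-connectedness, indeed contractibility. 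The price is the boundary subtlety you correctly isolate: Lemma~\ref{lem: CPT_corr_regret_dir} is stated relative to the support of $p$, and on a proper face its hypotheses can fail for the restricted profiles, so your interior-approximation argument (applying the lemma at the points $H(p_n,\tau)$ and passing to the limit via continuity of $f$) is essential, whereas the paper's stochastic-dominance step needs only $p_{j_2}>0$ and works at boundary points directly. Conversely, the paper's linearization buys extra structure on the similarly-ranked piece (it is homeomorphic to a convex polytope) without invoking the delicate five-subcase reference-point analysis behind Lemma~\ref{lem: CPT_corr_regret_dir}(ii).
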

\begin{proof}
Suppose the permutations $(a_1,\dots,a_t)$ and $(b_1,\dots,b_t)$ can be chosen such that they are equal. Let
\begin{align}\label{eq: p_to_l}
	l_j := w_i(\sum_{k=1}^j p_{a_k}) = w_i(\sum_{k=1}^j p_{b_k}), \text{ for } 1 \leq j \leq t.
\end{align}
For every vector $l = (l_1,\dots,l_t) \in \bbR^t$ such that $0 \leq l_1 \leq \dots \leq l_t = 1$, there corresponds a unique probability vector $p = (p_1,\dots,p_t)$ satisfying equations~(\ref{eq: p_to_l}) and this mapping is continuous because $w_i(\cdot)$ is a continuous strictly increasing function. Thus we have a one-to-one correspondence between probability vectors $(p_1,\dots,p_t)$ and the vectors $(l_1,\dots,l_t)$.

Inequality~(\ref{eq: simple_notation_ineq}) can then be written as
\begin{equation}\label{eq: ineq_l_linear}
	l_t x_{a_t} + \sum_{i=1}^{t-1} l_{t-i}[x_{a_{t-i}} - x_{a_{t-i+1}}] \geq l_t y_{b_t} + \sum_{i=1}^{t-1} l_{t-i}[y_{b_{t-i}} - y_{b_{t-i+1}}].
\end{equation}
	
Since this is linear in $(l_1,\dots,l_t)$, the set of all vectors $(l_1,\dots,l_t)$ satisfying inequality~(\ref{eq: ineq_l_linear}) is a convex polytope. In particular, it is connected. Thus the set $C(i,s_i,d_i)$ is also connected.

Suppose now the permutations $(a_1,\dots,a_t)$ and $(b_1,\dots,b_t)$ cannot be chosen to be equal. Then there exists $1\leq j_1,j_2 \leq t$ such that $x_{j_1} > x_{j_2}$ and $y_{j_1} \leq y_{j_2}$. If $p \in C(i,s_i,d_i)$ such that $p_{j_2} > 0$, then, by the stochastic dominance property, the following probability vector $q(\epsilon)$, for all $0 \leq \epsilon \leq 1$, also belongs to $C(i,s_i,d_i)$:
\begin{align*}
	q_j(\epsilon) = \begin{cases}
		p_{j_1} + (1-\epsilon) p_{j_2} \text{ if } j = j_1,\\
		\epsilon p_{j_2} \text{ if } j = j_2,\\
		p_j \text{ otherwise}.
	\end{cases}
\end{align*}
Thus, from every vector $p \in C(i,s_i,d_i)$, we have a path connecting it to a probability vector $p' \in C(i,s_i,d_i)$ such that $p'_{j_2} = 0$. To show that $C(i,s_i,d_i)$ is connected it is enough to show that the subset
\[
	C'(i,s_i,d_i) = \{p' \in C(i,s_i,d_i)|p'_{j_2} = 0\}.
\]
is connected. From (\ref{eq: conn_cpt_x}) and (\ref{eq: conn_cpt_y}), we can see that the CPT values of the prospects $(p,x)$ and $(p,y)$ with probability vector restricted to $C'(i,s_i,d_i)$ do not depend on the outcomes $x_{j_2}$ and $y_{j_2}$. If one can now choose permutations $(a'_1,\dots,a'_{t-1})$ and $(b'_2,\dots,b'_{t-1})$ of $\{1,\dots,t\}\back \{j_2\}$ such that
\[
	x_{a'_1} \geq x_{a'_2} \geq \dots \geq x_{a'_{t-1}} \text{ and } y_{b'_1} \geq y_{b'_2} \geq \dots \geq y_{b'_{t-1}},
\]
then, as before, one can argue that the set $C'(i,s_i,d_i)$ is connected. If not, we can continue to decrease the support of the probability vectors under consideration. This process terminates since our state space is finite.
\end{proof}

Even though the sets $C(i,s_i,d_i)$ are connected, their intersection might be disconnected, as in example~\ref{ex: disconn}.


\begin{example}\label{ex: disconn}
 Consider a $2$ player game with each player having three pure strategies: TOP, CENTER, BOTTOM  for player $1$ (row player) and RED, YELLOW, GREEN for player $2$ (column player), with the corresponding payoffs as shown in table~\ref{tab: 3x3game}. For both the players, let $v_i(\cdot)$ be the identity function. For the probability weight function $w_i(\cdot)$ we employ the function suggested by \citet{prelec1998probability}, which, for $i = 1,2$, is given by
\[
	w_i(p) = \exp \{-(-\ln p)^{\alpha_i}\},
\]
for some $\alpha_i \in (0,1]$. We take $\alpha_1 = 0.5$ and $\alpha_2 = 1$.
\begin{table}
\centering
\begin{tabular}{c | c | c | c|}
	 \multicolumn{1}{c}{}	& \multicolumn{1}{c}{RED}   &  \multicolumn{1}{c}{YELLOW}  &  \multicolumn{1}{c}{GREEN}   \\
	 \cline{2-4}
	 TOP	& $69,10$ & $61,0$ & $20,10$\\
	 \cline{2-4}
	 CENTER	& $50,0$ & $60,10$ & $30,0$\\
	 \cline{2-4}
	 BOTTOM	& $101,0$ & $41,10$ & $0,0$\\
	 \cline{2-4}
	 \end{tabular} 
	 \caption{Payoff matrix for the game in example~\ref{ex: disconn}}\label{tab: 3x3game}
\end{table}
We will now see that the set $C(1,\text{TOP})$ is disconnected. Fix the correspondence $(R,Y,G) \leftrightarrow (\text{RED, YELLOW, GREEN})$. The set $C(1,\text{TOP},\text{BOTTOM})$ consists of all probability vectors $p = (p_R,p_Y,p_G) \in \Delta^2$ satisfying the following inequality:
\begin{align*}
	20 &+ w_1(p_R + p_Y)[61 - 20] + w_1(p_R)[69 - 61]\\
	  & \geq 0 + w_1(p_R + p_Y)[41 - 0] + w_1(p_R)[101 - 41].
\end{align*}
This holds iff $p_R \leq 0.40$ (all the decimal numbers henceforth are correct up to two decimal points). Thus, we have
\[
	C(1,\text{TOP,BOTTOM}) = \{p \in \Delta^2 | p_R \leq 0.40\}.
\]
The set $C(1,\text{TOP},\text{CENTER})$ consists of all probability vectors $p = (p_R,p_Y,p_G) \in \Delta^2$ 
satisfying the inequality
\begin{align*}
	20 &+ w_1(p_R + p_Y)[61 - 20] + w_1(p_R)[69 - 61]\\
	& \geq 30 + w_1(p_R + p_Y)[50 - 30] + w_1(p_Y)[60 - 50].
\end{align*}
Rearranging, we get
\[
	21w_1(1 - p_G) - 10w_1(1 - p_R - p_G) \geq 10 - 8w_1(p_R).
\]
For each $p_R \in [0,0.4]$, we solve the above inequality for $p_G$. The set $C(1,\text{TOP})$, as shown in figure~\ref{fig: not_conn}, is disconnected.
\begin{figure}
	\includegraphics[scale = 0.3]{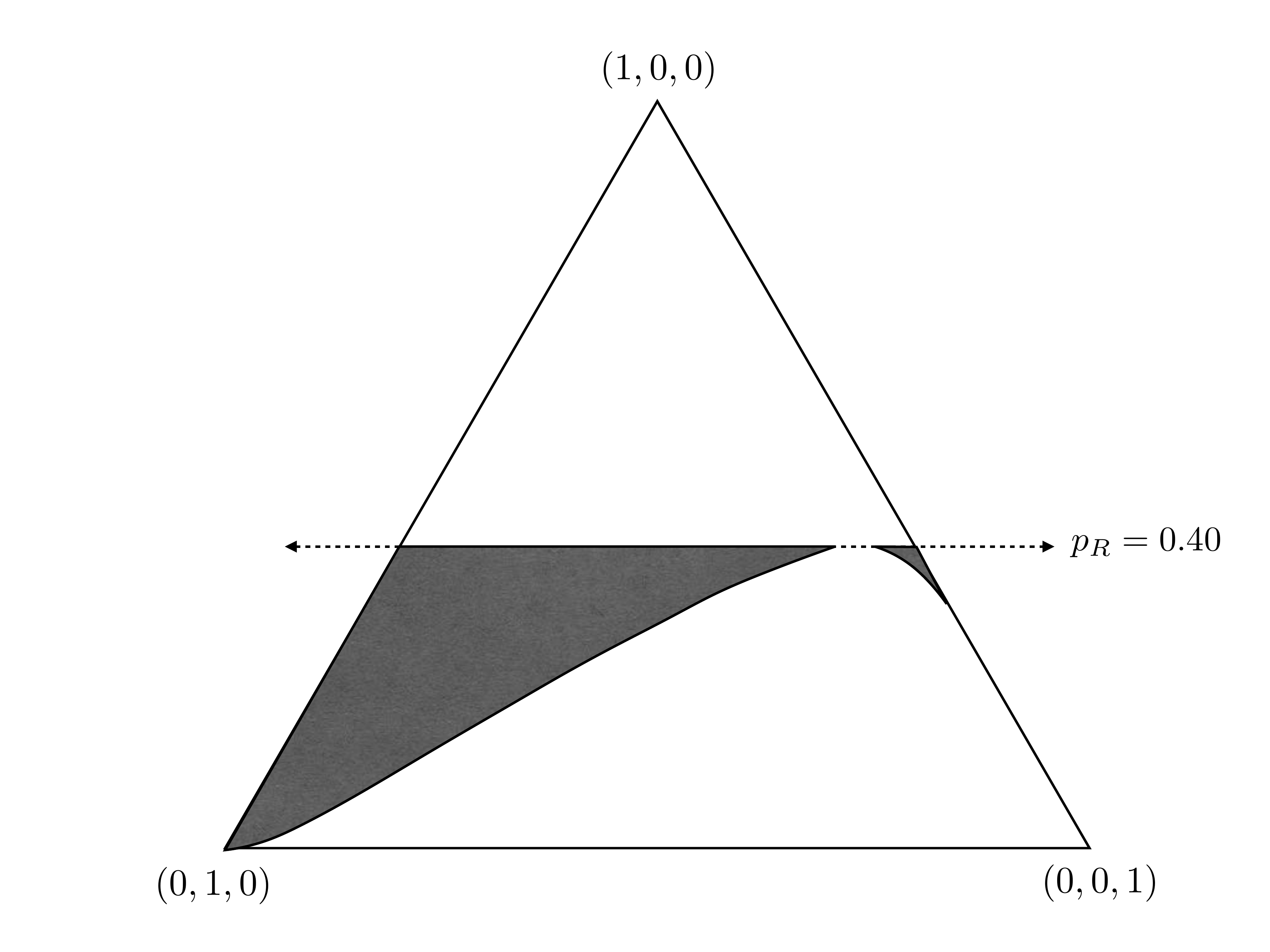}
	\caption{Standard $2$-simplex of probability vectors $p = (p_R,p_Y,p_G)$. The shaded region represents the set $C(1,\text{TOP})$ and is disconnected.}\label{fig: not_conn}
\end{figure}
One can check that 
\[
	(0,\epsilon,1-\epsilon) \in C(1,\text{CENTER}) \text{ and } (1-\epsilon,\epsilon,0) \in C(1,\text{BOTTOM}),
\] 
for $\epsilon \in [0,0.20]$. We cannot as yet conclude that the set $C(1)$ is disconnected, because of the existence of joint probability distributions $\mu$ with marginal distribution $\mu_1(\text{TOP}) = 0$. We now show that $C(2)$ cannot contain any distribution $\mu$ with $\mu_1(\text{TOP}) = 0$.

Fix the correspondence $(T,C,B) \leftrightarrow (\text{TOP,CENTER,BOTTOM})$. A similar analysis for player $2$ shows that
\begin{align*}
	C(2,\text{RED}) &= \{p \in \Delta^2 | p_T \geq 0.5\},\\
	C(2,\text{YELLOW}) &= \{p \in \Delta^2 | p_T \leq 0.5\},\\
	C(2,\text{GREEN}) &= \{p \in \Delta^2 | p_T \geq 0.5\}.
\end{align*}
Suppose that there were $\mu \in C_{CPT}$ with $\mu_{1}(\text{TOP}) = 0$. Then 
	$$\mu(\text{TOP,RED}) = \mu(\text{TOP,YELLOW}) = \mu(\text{TOP,GREEN}) = 0,$$
and from the structure of the sets $C(2,\text{RED})$ and $C(2,\text{GREEN})$ we get 
$$\mu_2(\text{RED}) = \mu_2(\text{GREEN}) = 0.$$ Thus, the joint probability $\mu$ has support only on the strategy pairs (CENTER,YELLOW) and (BOTTOM,YELLOW). Thus, player $2$ always plays strategy YELLOW and player $1$ mixes between CENTER and BOTTOM. However, given player $2$ plays strategy YELLOW, player $1$'s TOP strategy dominates strategies CENTER and BOTTOM. Hence such a joint probability distribution is not possible. Thus there does not exist any distribution $\mu \in C_{CPT}$ with $\mu_{1}(\text{TOP}) = 0$.

There is a possibility that one of the components of $C(1,\text{TOP})$ could disappear in the intersection $C(1) \cap C(2)$. However, this does not happen because both the distributions $\bar \mu, \tilde \mu$ in figure~\ref{fig: sample_dist} belong to $C_{CPT}$ with $\bar \mu_{-1}^{\text{TOP}}$ and  $\tilde \mu_{-1}^{\text{TOP}}$ belonging to different components of $C(1,\text{TOP})$.
\begin{figure}
\centering
\begin{tabular}{c | c | c | c|}
	 \multicolumn{1}{c}{}	& \multicolumn{1}{c}{RED}   &  \multicolumn{1}{c}{YELLOW}  &  \multicolumn{1}{c}{GREEN}   \\
	 \cline{2-4}
	 TOP	& $0.4$ & $0.1$ & $0.5$\\
	 \cline{2-4}
	 CENTER	& $0$ & $0.05$ & $0.5$\\
	 \cline{2-4}
	 BOTTOM	& $0.4$ & $0.05$ & $0$\\
	 \cline{2-4}
	 \end{tabular} 
	 \vskip\baselineskip
	 \centering
	 \begin{tabular}{c | c | c | c|}
	 \multicolumn{1}{c}{}	& \multicolumn{1}{c}{RED}   &  \multicolumn{1}{c}{YELLOW}  &  \multicolumn{1}{c}{GREEN}   \\
	 \cline{2-4}
	 TOP	& $0.4$ & $0$ & $0.6$\\
	 \cline{2-4}
	 CENTER	& $0$ & $0$ & $0.6$\\
	 \cline{2-4}
	 BOTTOM	& $0.4$ & $0$ & $0$\\
	 \cline{2-4}
	 \end{tabular} 
	 \caption{Un-normalized distributions $\bar \mu$ and $\tilde \mu$.}\label{fig: sample_dist}
\end{figure}

\end{example}


\section{Conclusions and future work}

Although the set of correlated equilibria under CPT has a more complicated geometry than a convex polytope, property (P), on the intersection of the Nash and correlated equilibrium sets, continues to hold. Property (P) is particularly relevant to the interactive learning problem in game theory \citep{foster1997calibrated,foster2006regret,hart2000simple}. This raises the question of analyzing the interactive learning problem under cumulative prospect theoretic preferences. We leave this for future work.


\bibliographystyle{abbrvnat} 
\bibliography{Bib_Database}

\end{document}